\definecolor{Gray}{gray}{0.9}
\newcommand{\xmark}{\ding{55}}
\newcommand{\R}{\mathbb{R}}
\newcommand{\be}{\begin{equation}}
\newcommand{\ee}{\end{equation}}
\DeclareMathOperator{\bR}{\mathbb{R}}
\newtheorem*{theorem}{Theorem}
\numberwithin{equation}{section}
\begin{document}

\title{Visual illusions via neural dynamics: Wilson-Cowan-type models and the efficient representation principle}

\author{Marcelo Bertalm\'io\\
DTIC, Universitat Pompeu Fabra, Barcelona, Spain\\
\href{mailto:marcelo.bertalmio@upf.edu}{marcelo.bertalmio@upf.edu}
\and
Luca Calatroni\\
Universit\'e C\^ote d'Azur, CNRS, INRIA,\\
Laboratoire d'Informatique, Signaux et Syst\`emes de Sophia Antipolis,  France\\
\href{mailto:calatroni@i3s.unice.fr}{calatroni@i3s.unice.fr}
\and
Valentina Franceschi\\
IMO, Universit\'e Paris-Sud, Orsay, France\\
\href{mailto:valentina.franceschi@u-psud.fr}{valentina.franceschi@u-psud.fr}
\and
Benedetta Franceschiello\\
 FAA, LINE, Radiology, CHUV, Lausanne, Switzerland\\
\href{mailto:benedetta.franceschiello@fa2.ch}{benedetta.franceschiello@fa2.ch}
\and
Alexander Gomez-Villa\\
DTIC, Universitat Pompeu Fabra, Barcelona, Spain\\
\href{mailto:alexander.gomez@upf.edu}{alexander.gomez@upf.edu}
\and 
 Dario Prandi \\
Universit\'e Paris-Saclay, CNRS, CentraleSupélec, \\ Laboratoire des signaux et syst\`emes, Gif-sur-Yvette, France\\
\href{mailto:dario.prandi@l2s.centralesupelec.fr}{dario.prandi@l2s.centralesupelec.fr}
 }
 
 \date{}

\maketitle

\begin{abstract}
We reproduce supra-threshold perception phenomena, specifically visual illusions, by Wilson-Cowan-type models of neuronal dynamics. Our findings show that the ability to replicate the illusions considered is related to how well the neural activity equations comply with the efficient representation principle. Our first contribution consists in showing that the Wilson-Cowan (WC) equations can reproduce a number of brightness and orientation-dependent illusions. Then, we formally prove that there can't be an energy functional that the Wilson-Cowan dynamics are minimizing. This leads us to consider an alternative, variational modelling which has been previously employed for local histogram equalization (LHE) tasks. In order to adapt our model to the architecture of V1, we perform an extension that has an explicit dependence on local image orientation.  Finally, we report several numerical experiments showing  that LHE provides a better reproduction of visual illusions than the original WC formulation and that its cortical extension is capable to reproduce also complex orientation-dependent illusions.
\end{abstract}


\textbf{New \& Noteworthy}: We show that the Wilson-Cowan equations can reproduce a number of brightness and orientation-dependent illusions. Then, we formally prove that there can't be an energy functional that the Wilson-Cowan equations are minimizing, making them sub-optimal with respect to the efficient representation principle. We thus propose a slight modification that is consistent with such principle and show that this provides a better reproduction of visual illusions than the original Wilson-Cowan formulation. We also consider the cortical extension of both models in order to deal with more complex orientation-dependent illusions.

\section{Introduction}

The goal of this work is to point out the intimate connections existing between three popular approaches in vision science: the Wilson-Cowan equations, the study of visual brightness illusions, and the efficient representation theory.

As other articles in this special issue make abundantly clear, Wilson-Cowan equations have a long and successful story of modelling cortical low-level dynamics \cite{Cowan2016}.  Nonetheless, the study of psychophysics by Wilson-Cowan equations (\cite{Adini1997,Herzog2003,BertalmioJPP2009,Ernst2016,BertalmioVSS2017,Wilson2003,Wilson2007,Wilson2017}) is a topic that hasn't been addressed much in neuroscience,
and we are not aware of publications
in which Wilson-Cowan equations are used for predicting brightness illusions. In this work, we aim to fill this gap.

The study of visual illusions has always been key in the vision science community,
as the mismatches between reality and perception provide insights that can be very useful to develop new models of visual perception \cite{Kingdom2011} or of neural activity \cite{Eagleman2001,Murray2013}, and also to validate the existing ones.
It is commonly accepted that visual illusions arise due to neurobiological constraints \cite{Purves2008} that modify the underpinned mechanisms
of the visual system. 

The efficient representation principle, introduced by Attneave \cite{Attneave1954} and Barlow \cite{Barlow1961},
states that neural responses aim to overcome these neurobiological constraints and to optimize the limited biological resources by being tailored to the statistics of the images that the individual typically encounters, so that visual information can be encoded in the most efficient way.
This principle is a general strategy observed across mammalian, amphibian and insect species \cite{Smirnakis1997} and is embodied by neural processing according to abundant
experimental evidence \cite{Fairhall2001,Mante2005,Benucci2013}.

Our work aims at pulling together the three approaches just mentioned, providing a more unified framework to understand vision mechanisms.
First, we show that the Wilson-Cowan equations are able to qualitatively reproduce a number of visual illusions.
Secondly, we formally prove that Wilson-Cowan equations (with constant input) are not variational, in the sense that they are not minimizing any energy functional.
Next, we detail how a simple modification turning the Wilson-Cowan equations variational yields a local histogram equalisation method that is consistent with the efficient representation principle. We finally show how this new formulation provides a better reproduction of visual illusions than the Wilson-Cowan model.

We remark that our model has to be intended as a proof of concept, whose objective is the reproduction of perceptual phenomena at a macroscopic level with no quantitative assessment on analogous psychophysical data. There are in fact very important limitations for doing that, since such comparison would require both a perfect knowledge of how behavioural data were collected, and a tuning of the model parameters to match with the observed perception. Nonetheless, we believe that the numerical evidence of our experiments and our theoretical considerations can be used for future research studies comparing our computational results with the ones corresponding to experiments coming from psychophysics.

\section{Materials and methods}

\subsection{Visual illusions}  \label{sec:visual_ill}

Computational models able to reproduce visual illusions represent very effective methods to test new hypotheses and generate new insights, both for neuroscience and applied disciplines such as image processing.
Illusions can be classified according to the main feature detection mechanisms involved during the visual process \cite{shapiro2016oxford}. In this contribution we considered two main groups of visual illusions to assess the efficacy of our model in reconstructing the perceptual process: \textit{brightness illusions}
and \textit{orientation-dependent illusions}.

\subsubsection{Brightness illusions}
Brightness illusions are a class of phenomena where 
image regions with the same gray level are perceived as having different brightness, depending on the shapes, arrangement and gray level of the surrounding elements.
Fig. \ref{fig:summary_brightness_illusion} shows the nine brightness illusions we have chosen to perform tests on in this paper. They are all very popular and at the same time they represent a diverse set, as we can see from the following descriptions.

\paragraph{White's illusion:} the left gray rectangle appears darker than the right one, while both are identical \cite{white1979new} (Fig. \ref{fig:summary_brightness_illusion}(a)).

\paragraph{Simultaneous brightness contrast:} the left gray square appears lighter than the right one, while both are identical \cite{bruke} (Fig. \ref{fig:summary_brightness_illusion}(b)).

\paragraph{Checkerboard illusion:} the  mid-gray square in the fifth column appears darker than the one in the seventh column, while both are identical 
\cite{devalois1990spatial} (Fig. \ref{fig:summary_brightness_illusion}(c)).

\paragraph{Chevreul illusion:} a pattern of homogeneous bands of increasing intensity from left to right is presented. However, the bands in the image are perceived as inhomogeneous, i.e. darker and brighter lines appear at the borders between adjacent bands \cite{ratliff1965mach} (Fig. \ref{fig:summary_brightness_illusion}(d)).

\paragraph{Chevreul cancellation:} when the order of the bands is reversed, now decreasing in intensity from left to right, the effect is cancelled \cite{geier2011changing} (Fig. \ref{fig:summary_brightness_illusion}(e)).

\paragraph{Dungeon illusion:} two gray rectangles are perceived as darker or lighter depending on the gray intensities of both the background and the grid, see \cite{bressan2001explaining}. The left rectangle is perceived as darker than the one on the right (Fig. \ref{fig:summary_brightness_illusion}(f)).

\paragraph{Grating induction:} the background grating (which can be tuned to different orientations) induces the appearance of a counter-phase grating in the homogeneous gray horizontal bar \cite{mccourt1982spatial} (Fig. \ref{fig:summary_brightness_illusion}(g)).

\paragraph{Hong-Shevell illusion:} the mid-gray half-ring on the left appears darker than the one on the right, while both are identical \cite{hong2004brightness} (Fig. \ref{fig:summary_brightness_illusion}(h)).

\paragraph{Luminance illusion:} four identical dots over a background where intensity increases from left to right, and the dots on the left are perceived being lighter than the ones on the right \cite{kitaoka} (Fig. \ref{fig:summary_brightness_illusion}(i)).

\begin{figure}[hbtp]
\centering			
\begin{subfigure}[t]{0.19\linewidth}
    \centering\includegraphics[width=\linewidth]{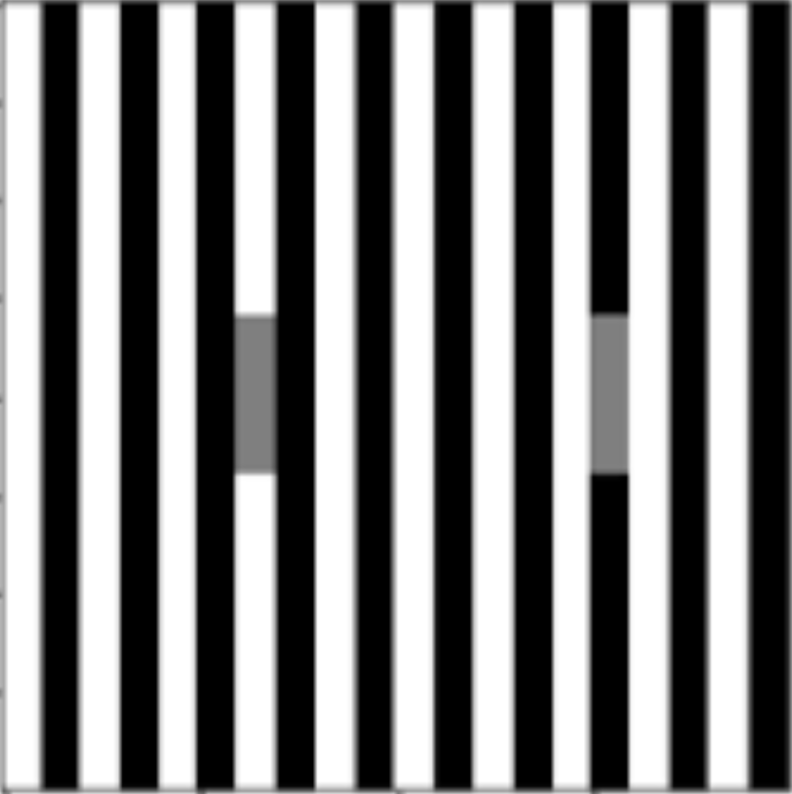}
    \caption{White}
  \end{subfigure}
  \begin{subfigure}[t]{.19\linewidth}
    \centering\includegraphics[width=\linewidth]{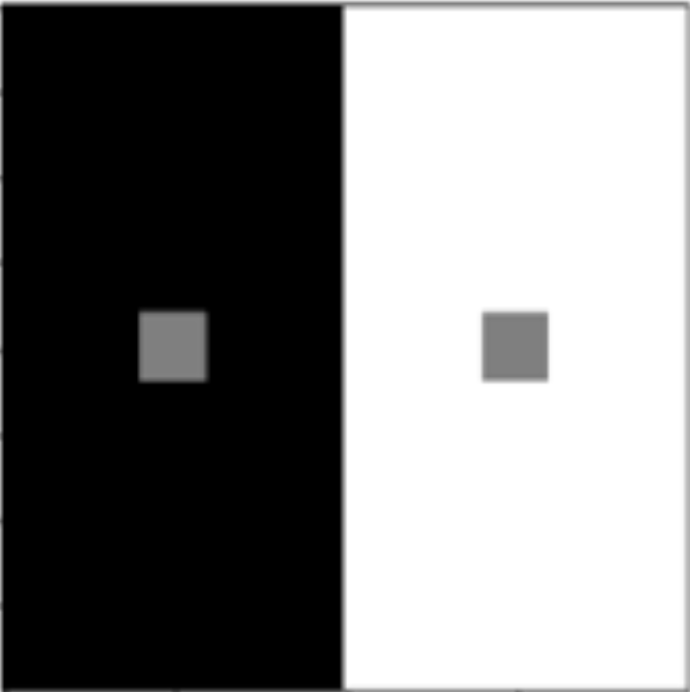}
    \caption{Brightness contrast}
  \end{subfigure}
  \begin{subfigure}[t]{.19\linewidth}
    \centering\includegraphics[width=\linewidth]{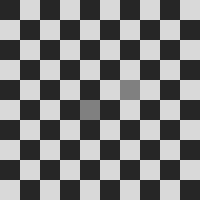}
    \caption{Checkerboard}
  \end{subfigure}
    \begin{subfigure}[t]{.19\linewidth}
    \centering\includegraphics[width=\linewidth]{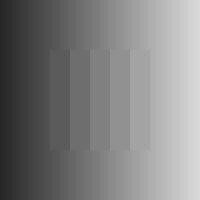}
    \caption{Chevreul}
  \end{subfigure}
    \begin{subfigure}[t]{.19\linewidth}
    \centering\includegraphics[width=\linewidth]{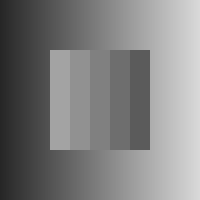}
    \caption{Chevreul cancellation}
  \end{subfigure}
  \begin{subfigure}[t]{0.19\linewidth}
    \centering\includegraphics[width=\linewidth]{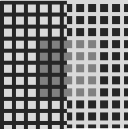}
    \caption{Dungeon}
  \end{subfigure}
  \begin{subfigure}[t]{.19\linewidth}
    \centering\includegraphics[width=\linewidth]{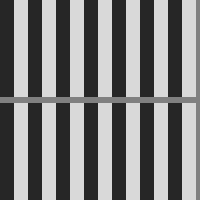}
    \caption{Grating induction}
  \end{subfigure}
  \begin{subfigure}[t]{.19\linewidth}
    \centering\includegraphics[width=\linewidth]{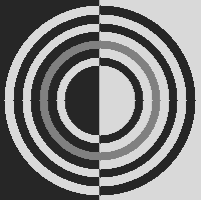}
    \caption{Hong-Shevell}
  \end{subfigure}
    \begin{subfigure}[t]{.19\linewidth}
    \centering\includegraphics[width=\linewidth]{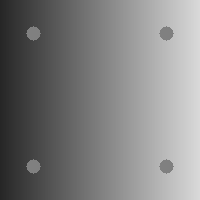}
    \caption{Luminance}
  \end{subfigure}
\caption{From left to right, top to bottom: White's illusion, Brightness contrast, the Checkerboard illusion, the Chevreul illusion, Chevreul cancellation, the Dungeon illusion, the Grating induction, the Hong-Shevell illusion and the Luminance illusion.}
\label{fig:summary_brightness_illusion}
\end{figure}

\subsubsection{Orientation-dependent illusions}
We also consider
orientation-dependent illusions,
where the perceptual phenomenon (e.g. in terms of brightness or contrast) is affected by the orientation of the image elements.

\paragraph{Poggendorff illusion.}
The Poggendorff illusion, {presented in the modified version considered in this work} in Fig. \ref{fig:summary_orientation_illusion}(a), is a very well known geometrical optical illusion in which the presence of a central surface induces a misalignment of the background lines. This illusion depends both on the orientation of the background lines and the width of the central surface \cite{Weintraub1971}, as the more the angle is close to $\pi/2$ the less is the bias, but in this example the perceived bias is also dependent on the brightness contrast between central surface and background lines.
\paragraph{Tilt illusion.}
The Tilt illusion is a phenomenon where the perceived orientation of a test line or grating is altered by the presence of surrounding lines or a grating with a different orientation. In our case we consider the effect that the orientation of a surround grating pattern 
has
on the perceived contrast of a grating pattern in the center: the inner circles in Figs. \ref{fig:summary_orientation_illusion}(b) 
and 
\ref{fig:summary_orientation_illusion}(c)
are identical but the latter is perceived as having more contrast than the former.
\begin{figure}[hbtp]
\centering

\begin{subfigure}{0.25\textwidth}
\centering
    \includegraphics[height=3.5cm]{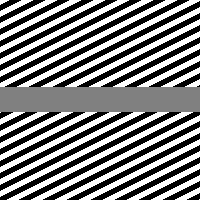}
    \caption{Poggendorff illusion.}
     \label{fig:poggendorff}
 \end{subfigure}
\begin{subfigure}{0.25\textwidth}
\centering
\includegraphics[height=3.5cm]{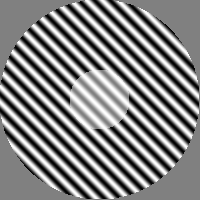}
\caption{Tilt illusion, same $\theta$.}
\label{fig:tilt_illusion-same}
\end{subfigure}
\begin{subfigure}{0.25\textwidth}
\centering
\includegraphics[height=3.5cm]{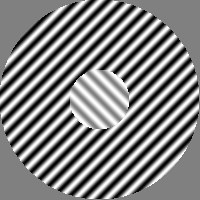}
\caption{Tilt illusion, different $\theta$.}
\label{fig:tilt_illusion-different}
\end{subfigure}
\caption{From left to right: a modified version of the Poggendorff illusion based on Grating Induction, a modified Tilt illusion with concentric circles having the same orientation and a modified Tilt illusion with concentric circles having different orientations.}
\label{fig:summary_orientation_illusion}
\end{figure}


\subsection{Wilson-Cowan-type models for contrast perception}

In this section we introduce four different evolution equations derived from the Wilson-Cowan formulation, that will be studied in this paper.
We recall that, denoting by $a(x,t)$ the state of a population of neurons with spatial coordinates $x\in\mathbb R^2$ at time $t>0$, the Wilson-Cowan equations proposed in \cite{WilsonCowan1973,Wilson1973bis} can be written\footnote{In \cite{WilsonCowan1973} the sigmoid function is applied outside of the integral term and not only on the activity $a(y,t)$ as in \eqref{eq:WC}. This corresponds to an ``activity-based'' model of neuron activation, while \eqref{eq:WC} corresponds to a ``voltage-based'' one. See \cite{Faugeras2009}, where the two models are shown to be equivalent.} as
\begin{equation}\label{eq:WC}
    \frac{\partial}{\partial t} a(x,t) = -\beta a(x,t) + \nu \int_{\mathbb R^2} \omega(x \| y) \sigma(a(y,t))\,dy + h(x),
\end{equation}
where $\beta>0$ and $\nu\in\R$ are fixed parameters, $\sigma:\R\to \R$ is a non-linear sigmoid saturation function, the kernel $\omega(x\|y)$ models interactions at two different spatial locations $x$ and $y$ (we will assume that the integral of $\omega$ is normalised to $1$) and $h$ is the input signal.

\subsubsection{Wilson-Cowan equations do not fulfill any variational principle}

Over the last thirty years, the use of variational methods in imaging has become increasingly popular as a regularisation strategy for solving general ill-posed imaging problems in the form
\begin{equation}  \label{eq:inv}
\text{find }u \quad\text{s.t.}\quad f=\mathcal{T}(u).
\end{equation}
Here, $f$ represents a given degraded image and $\mathcal{T}$ a (possibly non-linear) operator describing the degradation (e.g. noise, blur, under-sampling, etc.)

Due to the lack of fundamental properties such as existence, uniqueness and stability of the solution of the problem \eqref{eq:inv}, the idea of regularisation consists of incorporating \emph{a priori} information on the desired image $u_\star$ and on its closeness to the data $f$ by means of suitable variational terms.
This gives rise, in particular, to variational methods where one looks for an approximation $u_\star$ of the real solution $u$ by solving
\begin{equation}  \label{eq:prob_var}
  u_\star = \arg\min \mathcal E(u),
\end{equation}
where $\mathcal E$ is the energy functional combining regularisation and data fit, depending also on the given image $f$. 
A popular way to solve the variational problem
consists in finding $u_\star$ as the steady-state solution of the evolution equation given by the gradient descent of the energy functional
\begin{equation}\label{eq:grad_desc}
  \frac{\partial}{\partial t} u = - \nabla\mathcal E(u), \qquad u|_{t=0} = f,
\end{equation}
under appropriate conditions on the boundary of the image domain. 
 
In the context of vision science, evolution equations have been originally used as a tool to describe the physical transmission, diffusion and interaction phenomena of stimuli in the visual cortex \cite{Beurle1956,WilsonCowan1973,Wilson1973bis}.
Variational methods are the main tool of ecological approaches, that pose the efficient coding problem \cite{Olshausen2000} as an optimisation problem to be solved with evolution equations that minimise an energy functional \cite{Atick1992} involving natural image statistics and biological constraints. The resulting solution is optimal because it has minimal redundancy.


However, we must remark that, while considering the gradient descent of an energy functional gives always an evolution equation, the reverse is not true: not every evolution equation is minimising an energy functional.
In fact, this is the case for the Wilson-Cowan equations, which do not fulfil any variational principle, as we prove in Appendix~\ref{a:non-var}.
As a consequence, they are sub-optimal in reducing the redundancy.

We remark that it is possible to define an energy that decreases along trajectories of \eqref{eq:WC}, as done in \cite{French2004}. This  ensures in particular that even though the evolution is not variational, its steady states (i.e.,  solutions of \eqref{eq:WC} that are constant in time) can indeed be obtained as critical points of this energy.

\subsubsection{A modification of the Wilson-Cowan equations complying with efficient representation}  \label{sec:efficientWC}
 
Remarkably, the efficient representation principle has correctly predicted a number of neural processing aspects and phenomena like
the photoreceptor response performing histogram equalisation, 
the dominant features of the receptive fields of retinal ganglion cells (lateral inhibition, the switch from bandpass to lowpass filtering when the illumination decreases, and, remarkably, colour opponency, with photoreceptor signals being highly correlated but color opponent signals having quite low correlation),
or the receptive fields of cortical cells having a Gabor function form \cite{Atick1992,Daugman1985a,Olshausen2000}. 
Efficient representation is the only framework able to predict the functional properties of neurons from a simple principle, and given how simple the assumptions are it's really surprising that this approach works so well \cite{Meister1999}.

In \cite{BertalmioJPP2009} it is shown how a slight modification of the Wilson-Cowan formulation leads to a variational model, as we now present. Assuming that the activity signal $a$ is in the range $[0,1]$, we can re-write equation \eqref{eq:WC} in terms of a sigmoid $\hat\sigma$ shifted by $\frac{1}{2}$ (which we take as the average signal value) and inverted in sign, thus getting: 
\begin{equation}\label{eq:WC2}
    \frac{\partial}{\partial t} a(x,t) = -\beta a(x,t) - \nu \int_{\mathbb R^2} \omega(x \| y)  \hat\sigma\left(a(y,t)-\frac{1}{2}\right)\,dy + h(x).
\end{equation}
Note that this is just a re-writing of equation \eqref{eq:WC}, so it is still not associated to any variational method. However, if we now assume $\hat\sigma$ to be odd and replace the $\frac{1}{2}$ term by $a(x,t)$, we obtain
\begin{equation}\label{eq:LHE}
    \frac{\partial}{\partial t} a(x,t) = -\beta a(x,t) + \nu \int_{\mathbb R^2} \omega(x \| y)  \hat\sigma(a(x,t) - a(y,t))\,dy + h(x),
\end{equation}
and this equation is now a gradient descent equation, as it does fulfil a variational principle.

Furthermore, under the proper choice of parameters $\beta, \nu$ and input signal $h$, this evolution equation performs local histogram equalisation (LHE) \cite{Bertalmio2007}. This is key for our purposes, since, as 
Atick points out \cite{Atick1992}, one of the main types of redundancy or inefficiency in an information system like the visual system happens when
some neural response levels are used more frequently than others, and for this type of redundancy the optimal code is the one that performs histogram equalisation.

It is therefore expected that the modification of the Wilson-Cowan equations in \eqref{eq:LHE}, which better complies with the efficient representation principle, should be more effective in reducing redundancy than the original Wilson-Cowan model of equation \eqref{eq:WC}.

\subsubsection{Accounting for orientation}

Models \eqref{eq:WC} and \eqref{eq:LHE}
ignore orientation and as such they are not well-suited to explain a number of visual phenomena. 
For this reason, following \cite{BCFFP_SSVM}, we extend them to a third dimension, representing local image orientation, as follows.
We let $La: Q \times [0,\pi) \to \R$ be the cortical activation in V1 associated with the signal $a$, so that $La(x,\theta)$ encodes the response of the neuron with spatial preference $x$ and orientation preference $\theta$ to $a$. Mathematically, such activation is obtained via a suitable convolution with the receptive profiles of V1 neurons, as explained in Appendix~\ref{a:cortical}, see also \cite{Duits2010,Petitot,Prandi2017,Citti2006, sarti2015constitution}. Then, denoting  by $A(x,\theta,t)$ the cortical response at time $t$ for any $t>0$, the natural extension of equations  \eqref{eq:WC} and \eqref{eq:LHE} to the orientation dependent case is given by the two models:
\begin{gather}\label{eq:WC3D}
\begin{split}
  \frac{\partial}{\partial t} A(x,\theta,t)
  = -\beta A(x,\theta,t)+ \nu\int_0^\pi\int_{Q} \omega(x,\theta\|y,\phi)\sigma\Big(A(y,\phi,t)\Big)\,dy\, d\phi 
  + Lh(x,\theta), 
  \end{split}\\
\label{eq:LHE3D}
  \begin{split}
  \frac{\partial}{\partial t} A(x,\theta,t)= -\beta A(x,\theta,t)
  +\nu\int_0^\pi\int_{Q} \omega(x,\theta\|y,\phi) \hat\sigma\big(A(x,\theta,t)-&A(y,\phi,t) \big)\,dy\,d\phi+Lh(x,\theta), 
  \end{split}
\end{gather}
where $Lh(x,\theta)$ denotes the cortical activation in V1 corresponding to the visual input $h$ at spatial location $x$ and orientation preference $\theta$. 
We remark that these models describe the dynamic behaviour of activations in the 3D space of positions and orientation. As explained in Appendix~\ref{a:cortical}, once a stationary solution is found, the two-dimensional perceived image can be found by simply applying the formula 
\begin{equation}\label{eq:proj}
  a(x) = \frac1\pi\int_0^\pi A(x,\theta)\,d\theta.
\end{equation}

\subsubsection{Models under consideration}

We summarise here the four models we are going to test in the following sections. 
The orientation-independent WC and LHE models are:
\begin{align}
\frac{\partial}{\partial t} a(x,t)
& = -(1+\lambda)a(x,t)+ \frac{1}{2M}\int_{Q} \omega(x,y)\sigma\left(a(y,t)\right)\,dy + \lambda f_0(x)+\mu(x) \label{eq:WC2Dtag} \tag{WC-2D}\\ 
\frac{\partial}{\partial t} a(x,t)
& = -(1+\lambda)a(x,t) +\frac{1}{2M}\int_{Q} \omega(x,y)\hat\sigma\left(a(x,t)-a(y,t)\right)\,dy + \lambda f_0(x)+\mu(x) \label{eq:LHE2Dtag} \tag{LHE-2D},    
\end{align}
which relate to \eqref{eq:WC} and \eqref{eq:LHE} by simply choosing parameters as $\beta = 1+\lambda$ and  $\nu=1/2M$ where $M>0$ is a normalisation constant, and input signal $h(x)=\lambda f_0(x)+ \mu(x)$, where $\lambda>0$, $f_0(x)$ is the local intensity at $x\in Q$ of given image $f_0$ and $\mu(x)$ denotes a local average  of the initial stimulus $f_0$ around $x$ (a choice motivated by the averaging behaviour of cells in the magnocellular pathway \cite{Bertalmio2019VisualModels} and already considered in similar models e.g.  \cite{Bertalmio2007,BertalmioFrontiers2014}).

The orientation-dependent WC and LHE models can be similarly written as:
\begin{align}
\frac{\partial}{\partial t} A(x,\theta,t)
  = & -(1+\lambda)A(x,\theta,t)  + \frac{1}{2M}\int_0^\pi\int_{Q} \omega(x,\theta||y,\phi)\sigma\left(A(y,\phi,t)\right)\,dy\,d\phi \notag \\  & + \lambda Lf_0(x,\theta)+L\mu(x,\theta), \label{eq:WC3Dtag} \tag{WC-3D}\\ 
\frac{\partial}{\partial t} A(x,\theta,t)
 = & -(1+\lambda)A(x,\theta,t)+\frac{1}{2M}\int_0^\pi\int_{Q} \omega(x,\theta||y,\phi)\hat\sigma\left(A(x,\theta,t)-A(y,\phi,t)\right)\,dy\,d\phi  \notag \\
 & + \lambda Lf_0(x,\theta)+L\mu(x,\theta), \label{eq:LHE3Dtag} \tag{LHE-3D} 
\end{align}
which can analogously be related to \eqref{eq:WC3D} and \eqref{eq:LHE3D} by choosing the very same parameters as above and by now taking as cortical activation in V1 corresponding to $h$ the quantity $Lh(x,\theta)=\lambda Lf_0(x,\theta)+L\mu(x,\theta)$.

\subsubsection{Numerical implementation} 
All four relevant equations \eqref{eq:WC2Dtag}, \eqref{eq:LHE2Dtag}, \eqref{eq:WC3Dtag}, and \eqref{eq:LHE3Dtag} are numerically implemented via a forward Euler time-discretisation, as presented in \cite{Bertalmio2007}. For a given image $a$, the cortical activation $La$ is recovered via standard wavelet transform methods, as presented in \cite{BCFFP_SSVM} (see also \cite{Duits2010}). The codes, written in Julia \cite{bezanson2017julia}, are available at the following link: http://www.github.com/dprn/WCvsLHE.

All the considered images are of size $200\times 200$ pixels, and take values in the interval $[.15,.85]$ in order to avoid out-of-range issues.  We always consider $K = 30$ discretised orientations, as done in \cite{Boscain2018} for instance. As presented in Appendix~\ref{a:cortical}, the receptive profiles associated to the discretised orientations selected are obtained via cake wavelets \cite{Bekkers2014}, for which the frequency band \texttt{bw} is set to \texttt{bw}$ = 5$. 
The interaction kernel is taken to be a $2$D or $3$D Gaussian with standard deviation $\bm{\sigma}_\omega$, the local mean average $\mu$ is obtained via Gaussian filtering with standard deviation $\bm{\sigma}_\mu$. 
In our experiments we used the following two piece-wise linear functions as sigmoids:
\begin{equation}  \label{def:sigmoid}
  \hat\sigma(\rho) := \min\{1,\max\{\alpha\rho, -1\}\}, \qquad \quad \sigma(\rho) :=-\hat\sigma\left(x-\frac{1}{2}\right),
\end{equation}
with $\alpha = 5$, see Figure~\ref{fig:sigmoids}. Note that $\hat\sigma$, which will be used for LHE models, is odd and centered in zero while $\sigma$, which will be used for WC models, is shifted in $1/2$ and shows a reversed behaviour. This in fact corresponds to a change of sign in the integral terms of LHE models w.r.t.\ the WC ones, as discussed in Section \ref{sec:efficientWC}.


\begin{figure}[htp]   
    \centering
    \begin{subfigure}{0.45\textwidth}
    \centering
    \includegraphics[width=\textwidth]{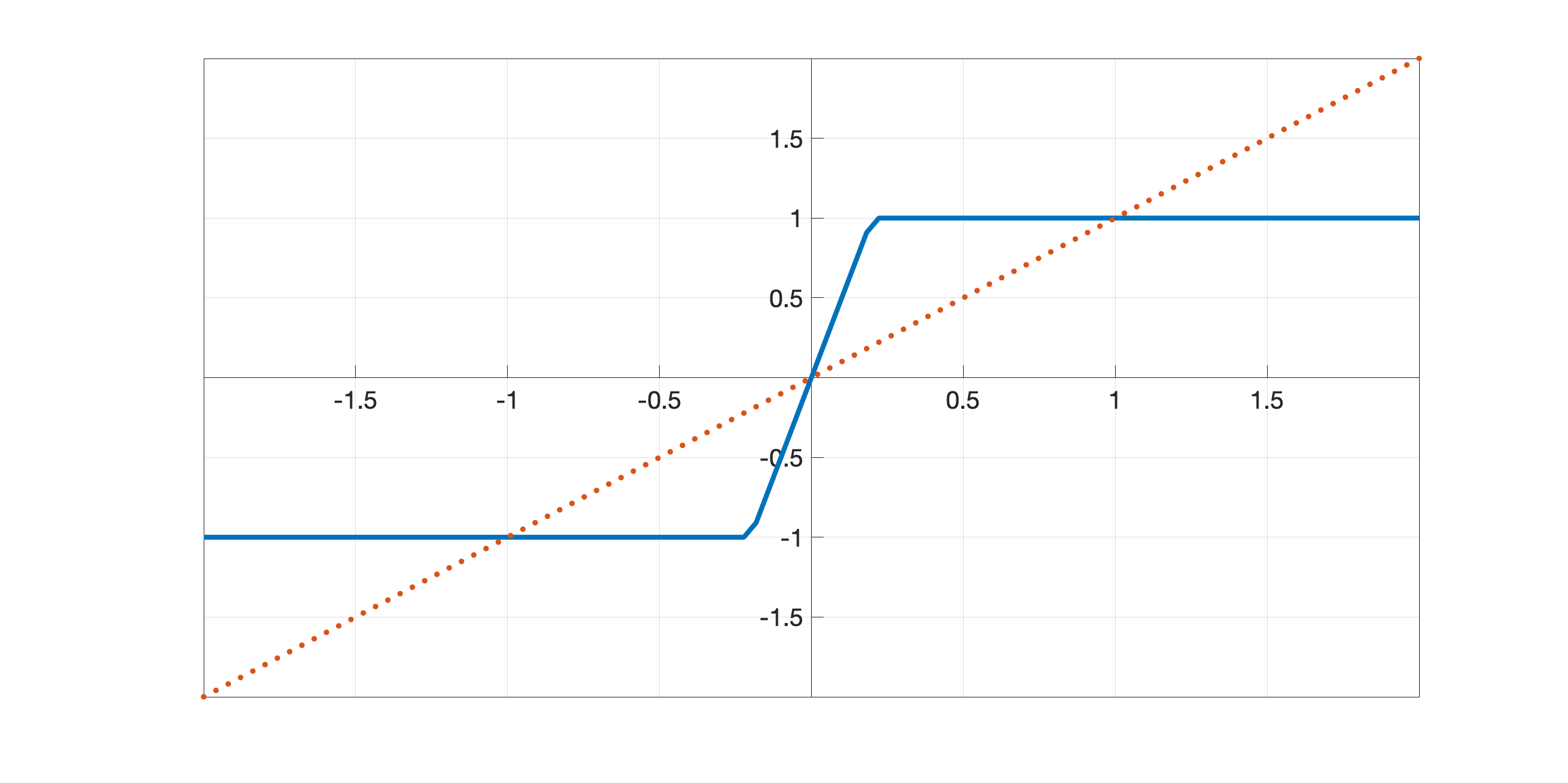}
    \caption{$\hat\sigma$ and the line $y=x$}
    \end{subfigure}
    \begin{subfigure}{0.45\textwidth}
    \centering
    \includegraphics[width=\textwidth]{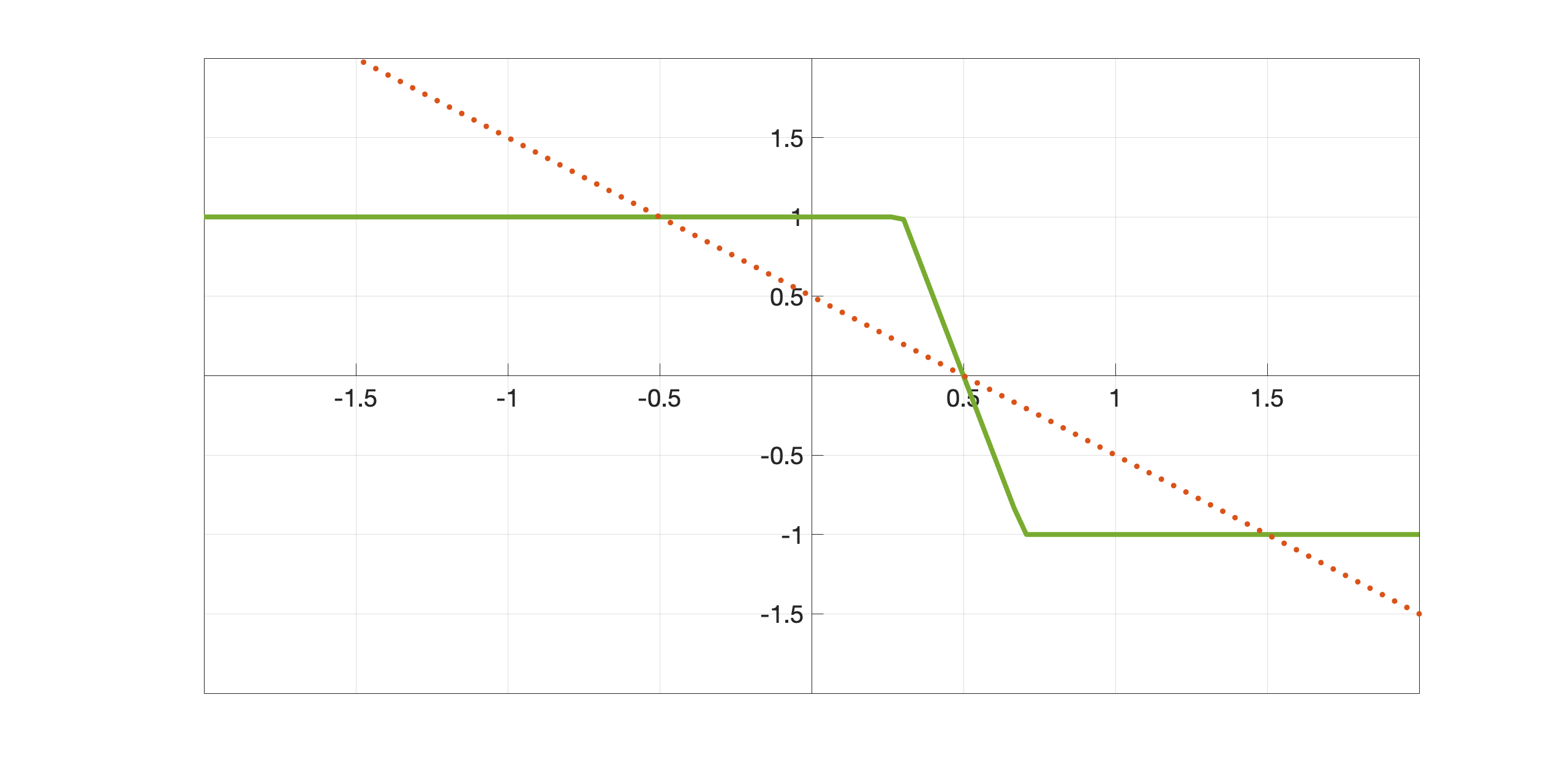}
    \caption{$\sigma$ and the line $y=-x+\frac{1}{2}$}
    \end{subfigure}
    \caption{Sigmoid functions in the form \eqref{def:sigmoid}, with $\alpha=5$, as considered in our experiments. }
      \label{fig:sigmoids}
\end{figure}

Finally, the evolution stops when the $L^2$ relative distance between two successive iterations is smaller than a tolerance $\tau = 10^{-2}$, which identifies convergence of the iterates to a stationary state.

\section{Results} \label{sec:results}

In this section, we present the results obtained by applying the four models described above to the visual illusions described in Section \ref{sec:visual_ill}. Our objective is to understand the capability of these models to \emph{replicate} the visual illusions under consideration. That is, we are interested in whether the output produced by the models qualitatively agrees with the human perception of the phenomena.
We stress that our study is purely qualitative as it has to be intended as a proof of concept showing how Wilson-Cowan-type dynamics can be effectively used to replicate the perceptual effects due to the observation of visual illusions. We do not address here the match with empirical data since those depend on several experimental conditions for which a correspondence with the model parameters is not clear. A dedicated study on experiments motivated by psychophysics, addressing the validation of our models and, possibly, allowing for the creation of ground-truth references for a quantitative assessment is left for future research.

Due to the lack of a universal metric adapted to the task of assessing the replication of visual illusions, we will evaluate replication or lack thereof by presenting relevant line profiles, i.e., plots of brightness levels along a single row (line), of images produced by the four models in consideration (a common tool used by several brightness/lightness/color models before ~\cite{Blakeslee2016,Otazu2008}).
These lines are chosen as to cross a section of the image called \emph{target}: A gray region in the image (or set of regions in the case of the Chevreul illusion), 
where the brightness illusion appears. 

In all the results shown in this section, the original visual stimulus profile is represented as a blue dashed line. The line profiles of the output models are represented as solid red \eqref{eq:LHE2Dtag}, green \eqref{eq:WC2Dtag}, magenta \eqref{eq:LHE3Dtag}, and cyan \eqref{eq:WC3Dtag} lines.



The parameters appearing in the models have been chosen independently for each illusion and each model, in order to obtain the best possible replication of the visual illusion.  Here, by best-replication we mean that the extracted line-profiles correctly mimic the perceptual outcome from a qualitative point of view.The chosen parameters are presented in Table~\ref{t:param}.\newpage

\newcolumntype{g}{>{\columncolor{Gray}}c}
\newcolumntype{a}{>{\columncolor{Gray}}l}
\begin{table}[ht]
\centering
\begin{tabular}{a||g|g|g|g||g|g|g|g||g|g|g|g||g|g|g|g}
\hline
\rowcolor{white}
& \multicolumn{4}{c||}{\textbf{WC-2D}} & \multicolumn{4}{c||}{\textbf{LHE-2D}} & \multicolumn{4}{c||}{\textbf{WC-3D}} & \multicolumn{4}{c}{\textbf{LHE-3D}}\\
\hline
\rowcolor{white}
\textbf{Illusion} &
$\bm{\sigma}_\mu$ & $\bm{\sigma}_\omega$ & $\lambda$ & $M$ & 
$\bm{\sigma}_\mu$ & $\bm{\sigma}_\omega$ & $\lambda$ & $M$ & 
$\bm{\sigma}_\mu$ & $\bm{\sigma}_\omega$ & $\lambda$ & $M$ & 
$\bm{\sigma}_\mu$ & $\bm{\sigma}_\omega$ & $\lambda$ & $M$ \\
\hline
\rowcolor{white}
White
 & 10 & 20 & .7 & 1.4
 & 10 & 50 & .7 & 1
 & 20 & 30 & .7 & 1.4
 & 2 & 50 & .7 & 1
\\
\hline
Brightness
& 2 & 10  & .7 & 1.4 
& 2 & 10  & .7 & 1 
& 2 & 10  & .7 & 1.4 
& 2 & 10  & .7 & 1 
\\
\hline
\rowcolor{white}
Checkerboard
 & 10  & 70 & .7 & 1.4
 & 10  & 70 & .7 & 1
  & 10  & 70 & .7 & 1.4
   & 10  & 70 & .7 & 1
   \\
\hline
Chevreul
 & 2  & 5 & .7 & 1
 & 2 & 10 & .7 & 1
 & 2 & 40 & .5 & 1
 & 5 & 7 & .7 & 1
\\
\hline
\rowcolor{white}
Chevreul canc.
 & 2 & 2 & .9 & 1
 & 5 & 3 & .9 & 1
 & 2 & 20 & .5 & 1.4
 & 5 & 3 & .9 & 1
\\
\hline
Dungeon
 & 6  & 10 & .7 & 1.4
 & 5 & 40 & .7 & 1
 & 2 & 50 & .7 & 1.4
 & 5 & 50 & .7 & 1
\\
\hline
\rowcolor{white}
Gratings
 & 2 & 6 & .7 & 1
 & 2 & 6 & .7 & 1
 & 2 & 6 & .7 & 1
 & 2 & 6 & .7 & 1
\\
\hline
Hong-Shevell
 & 5 & 20 & .7 & 1
 & 5 & .5 & .7 & 1
 & 10 & 30 & .7 & 1
 & 10 & 30 & .7 & 1
\\
\hline
\rowcolor{white}
Luminance
 &  2 & 6 & .7 & 1
 &  2 & 6 & .7 & 1
 &  2 & 6 & .7 & 1
 &  2 & 6 & .7 & 1
\\
\hline
Poggendorff
 & \xmark & \xmark & \xmark &\xmark
 & \xmark & \xmark & \xmark &\xmark
 & \xmark & \xmark & \xmark &\xmark
  & 3 & 10  & .5 & 1
\\
\hline
\rowcolor{white}
Tilt
 & \xmark & \xmark & \xmark &\xmark
 & \xmark & \xmark & \xmark &\xmark
 & \xmark & \xmark & \xmark &\xmark
  & 15 & 20 & .7 & 1
\\
\hline
\end{tabular}
\caption{Parameters used in the tests. 
}
\label{t:param}
\end{table}








\subsection{Orientation-independent brightness illusions}

{
Table \ref{t:param} summarises the replication results obtained for the illusions described in Section \ref{sec:visual_ill}: if the model replicates the illusion we indicate in the table the used parameters, otherwise a cross (\xmark) denotes no replication, i.e. the failure of the model to reproduce computational results corresponding to the visual perception of the considered illusion.}

\paragraph{White's illusion.} 
The chosen line profile for the plots in Fig.~\ref{fig:resultWhite} corresponds to the central horizontal line of the image, which crosses both gray patches.
As both plots show, all four models correctly predict the left target to be darker than the right one.

\begin{figure}[h!]
\centering			
    \centering\includegraphics[width=0.81\linewidth]{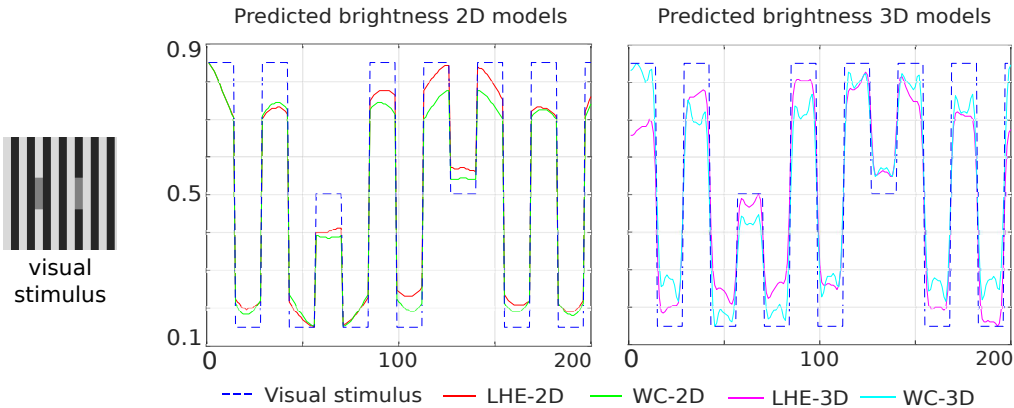}
\caption{Predicted brightness in White illusion}
\label{fig:resultWhite}
\end{figure}

\paragraph{Simultaneous brightness contrast.}  

The plots in Fig.~\ref{fig:resultBright} show the line profiles of the central horizontal line of the image, which crosses the two gray squares. We see that our four models replicate this illusion (left square lighter than the right square). In both the 2D and the 3D case, we observe that LHE methods result in an enhanced contrast effect w.r.t. WC methods.

\begin{figure}[h!]
\centering			
    \centering\includegraphics[width=0.81\linewidth]{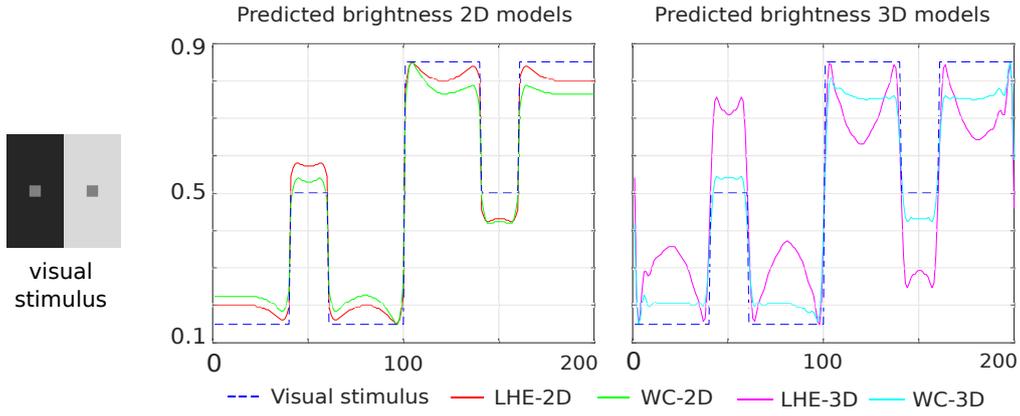}
\caption{Predicted brightness in simultaneous brightness contrast}
\label{fig:resultBright}
\end{figure}

\paragraph{Checkerboard illusion.} 

The chosen line profiles for this illusion are the two horizontal lines crossing, respectively, the left gray target and the right one. In Fig.~\ref{fig:resultCheck}, we chose to plot the first half of the line profile corresponding to the left target and the second half of the one corresponding to the right target.
%
The profiles of all the four models show replication of this illusion, by which the left target is perceived darker than the right one.

\begin{figure}[hbtp]
\centering			
    \centering\includegraphics[width=0.81\linewidth]{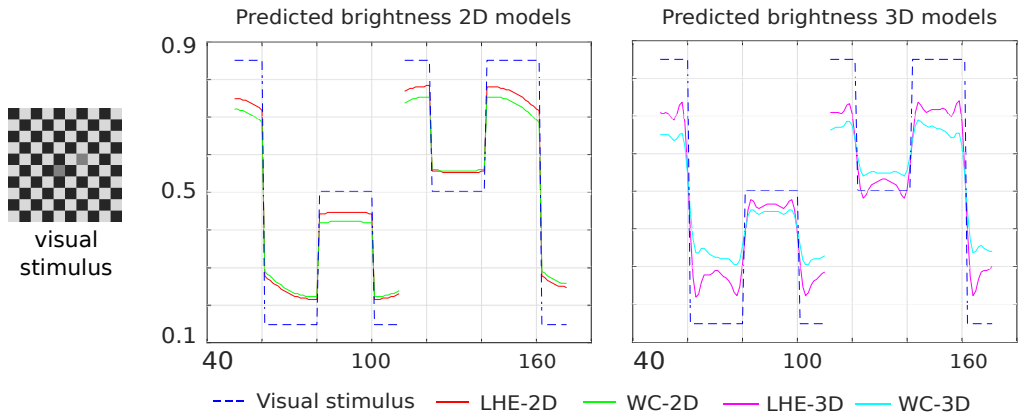}
\caption{Predicted brightness in Checkerboard illusion}
\label{fig:resultCheck}
\end{figure}
\newpage
\paragraph{Chevreul illusion.} Fig.~\ref{fig:resultChe} presents the line profiles for the central horizontal line.All four models correctly replicate the perceived changes within each band.

\begin{figure}[hbtp]
\centering			
    \centering\includegraphics[width=0.76\linewidth]{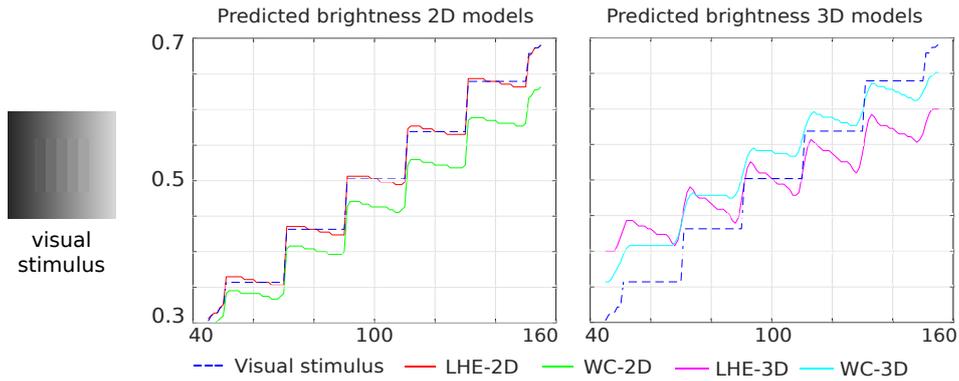}
\caption{Predicted brightness in Chevreul illusion}. 
\label{fig:resultChe}
\end{figure}




\paragraph{Chevreul cancellation.} 

The line profiles for the central horizontal line are presented in  Fig.~\ref{fig:resultCheCan}. In this case all models are able to correctly replicate the effect, although in the case of  \eqref{eq:WC2Dtag} and \eqref{eq:LHE3Dtag} the perceptual response is not perfect, due to the presence of some oscillations. We also remark that the correct replication of this illusion is extremely sensitive to the chosen parameters.



\begin{figure}[hbtp]
\centering			
    \centering\includegraphics[width=0.81\linewidth]{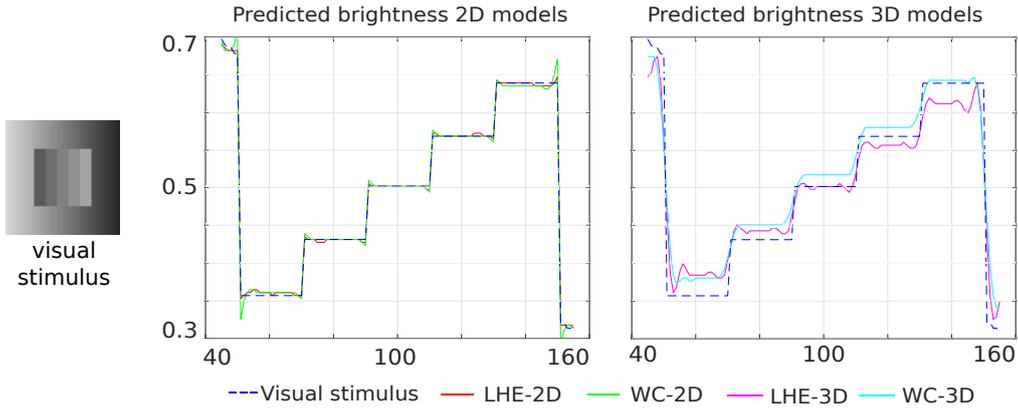}
\caption{Predicted brightness in Chevreul cancellation}
\label{fig:resultCheCan}
\end{figure}

\paragraph{Dungeon illusion.} 

Profiles of the central section (3 middle squares) of each target are shown in Fig.~\ref{fig:resultDung}. The first part of the plot (left to right) represents the profile of the rectangle on black background. The second plot shows the target on white background.
As these profiles show, our four proposed models replicate human perception (first target is predicted as darker than the second). Nevertheless, the assimilation effect (target intensity goes towards surrounding) is stronger in the 3D models.

\begin{figure}[h!]
\centering			
    \centering\includegraphics[width=0.79\linewidth]{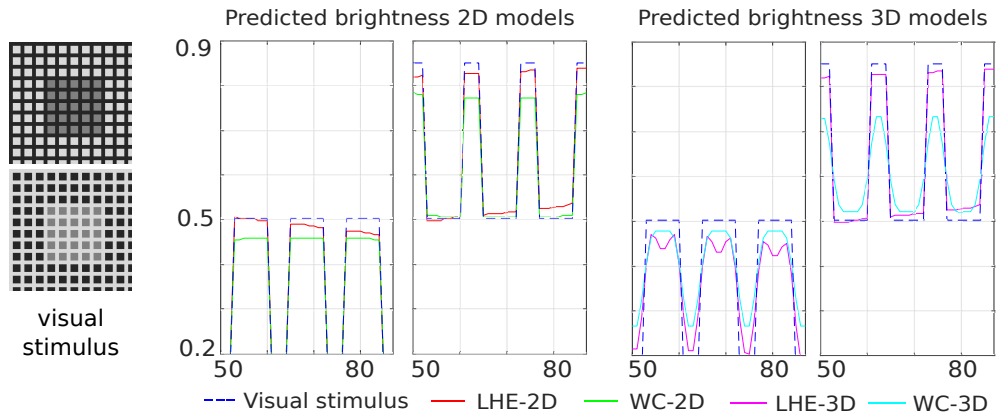}
\caption{Predicted brightness in Dungeon illusion}
\label{fig:resultDung}
\end{figure}

\paragraph{Grating induction.} 

In Fig.~\ref{fig:resultGrat} the continuous and dashed blue lines respectively show the profile of the grating and of the central horizontal line row of the visual stimulus. Then, the line profiles of the central horizontal line of the outputs have been plotted.
%
We observe that for both 2D and 3D models a counter-phase grating appears in the middle row, which successfully coincides with human perception. Notice that LHE methods have a higher amplitude in both cases.

\begin{figure}[h!]
\centering			
    \centering\includegraphics[width=0.79\linewidth]{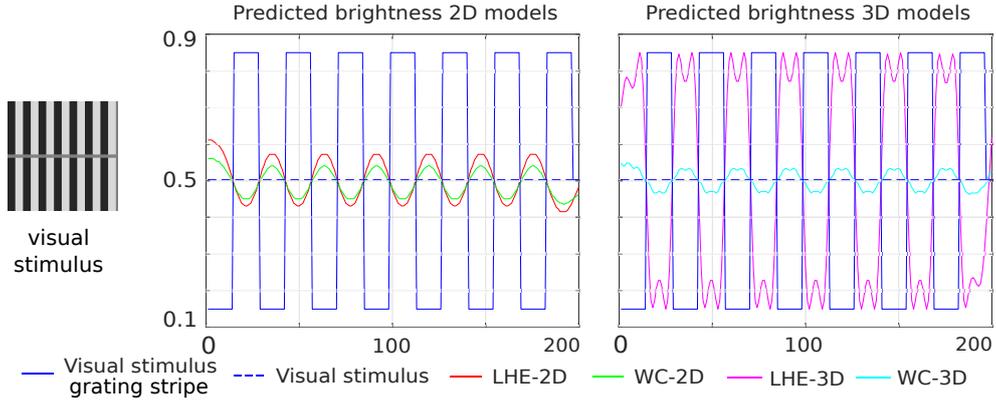}
\caption{Predicted brightness in grating induction}
\label{fig:resultGrat}
\end{figure}

\paragraph{Hong-Shevell illusion.}  Fig.~\ref{fig:resultHS} shows the line profiles of the central horizontal line around the target (gray ring) neighbourhood rings in the first half of the image. 
As in the case of the Dungeon illusion, we present in the first half of the plot (left to right) the output of the first stimulus (light background) and in the second half the output of the second (dark background).
%
We see how our four proposed models replicate the assimilation effect. Hence, the gray ring in the first image is predicted as brighter than the gray ring in the second visual stimulus. 

\begin{figure}[hbtp]
\centering			
    \centering\includegraphics[width=0.79\linewidth]{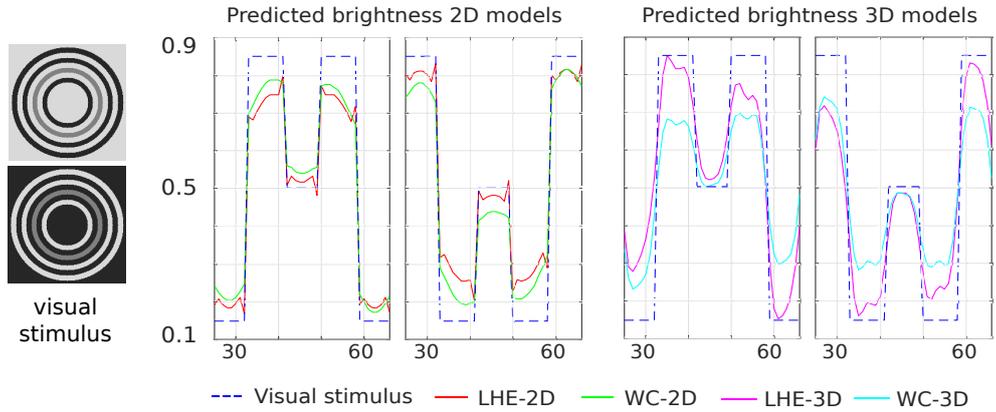}
\caption{Predicted brightness in Hong-Shevell illusion}
\label{fig:resultHS}
\end{figure}

\newpage

\paragraph{Luminance illusion.} Horizontal profiles crossing top left and right targets (gray circles) are depicted in Fig.~\ref{fig:resultLum}. For each target our four models reconstruct the left target as brighter than the right one. Hence,  all models correctly predict this contrast effect. In this case, LHE presents a higher contrast response in both responses (2D and 3D).

\begin{figure}[hbtp]
\centering			
    \centering\includegraphics[width=0.78\linewidth]{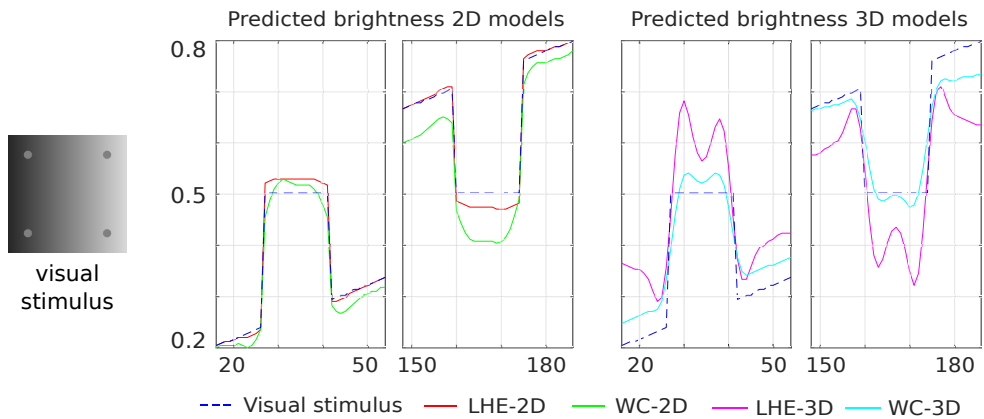}
\caption{Predicted brightness in luminance gradient illusion}
\label{fig:resultLum}
\end{figure}


We observe that in all the considered brightness illusions both the 3D methods present neighbourhood-dependent oscillations. 

\subsection{Orientation-dependent illusions}



\paragraph{Poggendorff illusion.} 
The output images and a zoom of the target gray middle area are presented in Fig.~\ref{fig:resultPogIn}. In this case \eqref{eq:WC2Dtag}, \eqref{eq:WC3Dtag}, and \eqref{eq:LHE2Dtag} are not able to completely replicate the illusion, since induced white lines on the gray area are not connected.
On the other hand, \eqref{eq:LHE3Dtag} successfully replicates the perceptual completion over the gray middle stripe. 






\begin{figure}[hbtp]
\centering			
    \centering\includegraphics[width=0.9\linewidth]{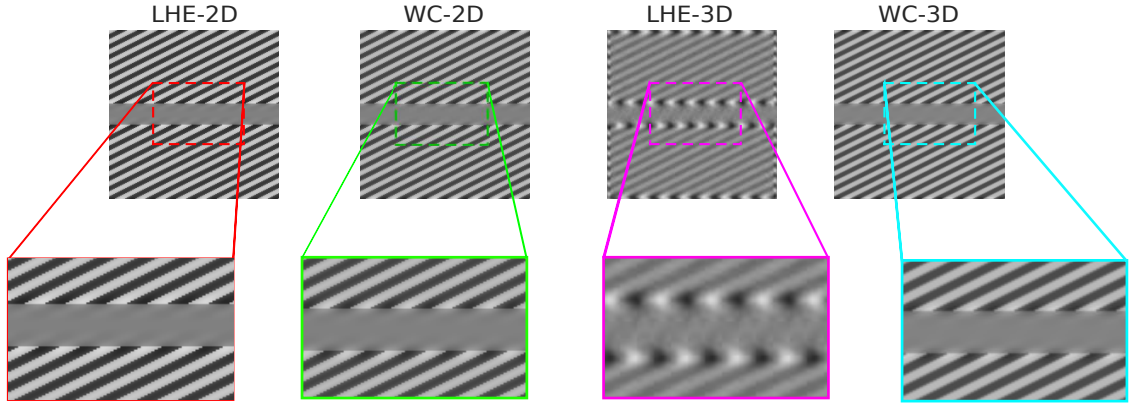}
\caption{Zoom of the predicted completion for Poggendorff illusion}
\label{fig:resultPogIn}
\end{figure}

\paragraph{Tilt illusion.} 
In Fig.~\ref{fig:resultTilt} we present line profiles, for both visual stimuli, for a diagonal line starting at the bottom left corner of the image and ending at the top right one. In order to be able to correctly compare the two images, the line profile of the second image (from top to bottom) has been extracted after flipping the outer circle along the vertical axis, so that the responses to both stimulus have the same background. 
%
Although there is a noticeable effect, such as a reduction in contrast for the \eqref{eq:WC2Dtag}, the difference between the responses to the two stimuli is very mild for all models with the exception of \eqref{eq:LHE3Dtag}. 

The fact that indeed this model is replicating the effect can be better appreciated looking at Fig.~\ref{fig:TiltShow}, which shows a composite of the inner circle for the responses to the two visual stimuli of the two orientation-dependent models. It is then evident that the \eqref{eq:LHE3Dtag} model yields a stronger result than the \eqref{eq:WC3Dtag} one. In fact, the former shows increased visibility (measured here as the contrast) for the half of the circle corresponding to the second stimulus  than the one corresponding to the first stimulus. On the other hand, in the case of the \eqref{eq:WC3Dtag} model (or of 2D models, not depicted here), the circle shows no difference among its two halves. This justifies our claim that the \eqref{eq:LHE3Dtag} model can increase the visibility of the inner circle (replicate the illusion) based on the orientation of the outer circle.

\begin{figure}[hbtp]
\centering			
    \centering\includegraphics[width=0.81\linewidth]{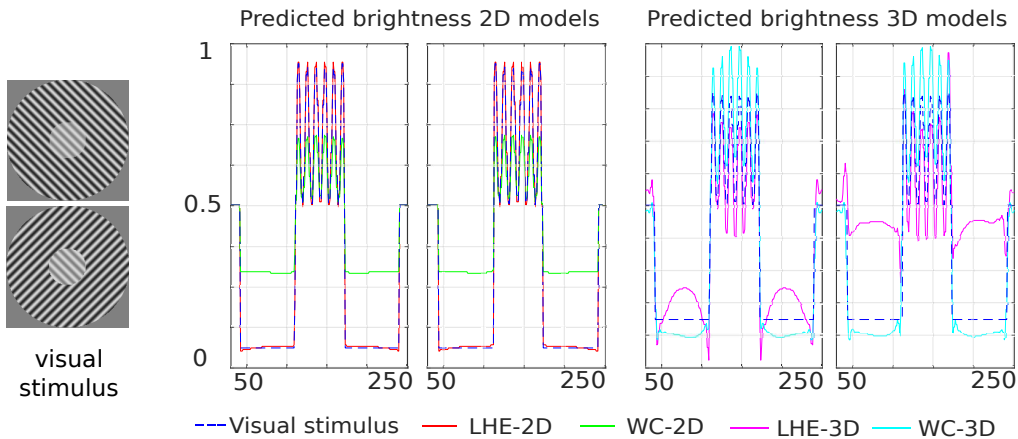}
\caption{Predicted brightness in Tilt illusion}
\label{fig:resultTilt}
\end{figure}

\begin{figure}[hbtp]
\centering			
    \centering\includegraphics[width=0.85\linewidth]{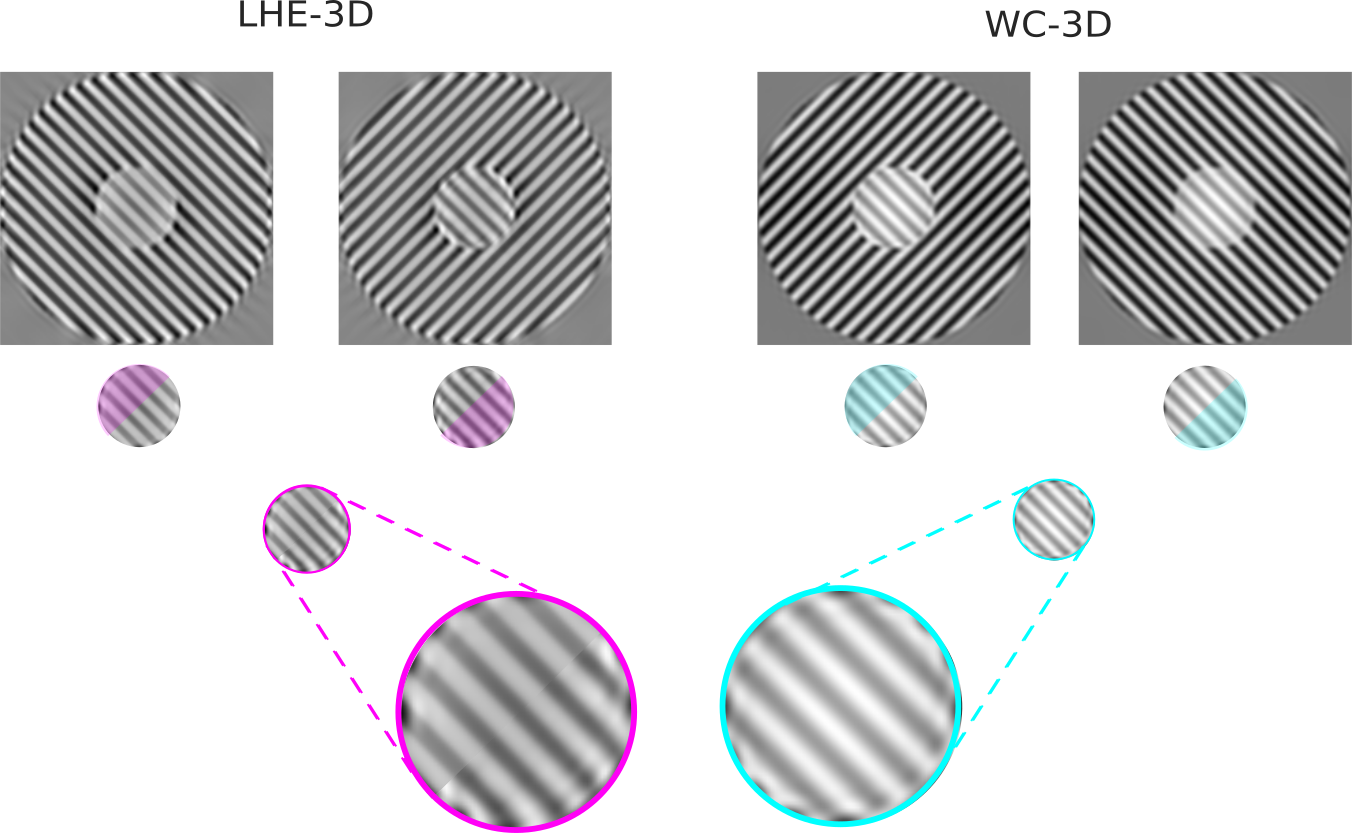}
\caption{Detail in predicted brightness in Tilt illusion}
\label{fig:TiltShow}
\end{figure}

\section{Discussion}

The results presented in the previous section show that the four models are able to reproduce several brightness illusions. Concerning orientation-dependent illusions we observe that, as expected, 2D models cannot reproduce them, while the only 3D model that correctly reproduces the perceptual outcome is the \eqref{eq:LHE3Dtag}. However we stress that determining replication or lack thereof in the Tilt illusion is subtle, as the observed effects are very mild.

As already mentioned, the parameters of the presented results are chosen independently from one illusion to the other in order to qualitatively optimise the perceptual replication in terms of suitable line profiles. Empirical observations show that the value of the model parameters involved are indeed related with the size of the target and the spatial frequency of the background. Nevertheless, if one settles for milder replications, it would be possible to choose more uniform parameters. For instance, this happens for the \eqref{eq:WC3Dtag} model in the Chevreul and Chevreul cancellation illusions, which can be reproduced simultaneously with parameters $\bm{\sigma}_\mu = 3$ and $\bm{\sigma}_\omega = 30$, although with less striking results.

Regarding the 3D models, we want to point out that we have chosen to use $K=30$ orientations whereas this number commonly takes values in the 12-18 range in the literature (e.g. \cite{Scholl2013,Chariker2016,Pattadkal2018}). 
Our selection of 30 orientations is motivated by some preliminary tests (which we are not presenting here) showing that a coarser orientation discretisation seems not to be sufficient to reproduce most of  the orientation-dependent illusions.  As future research we will test whether or not a different selection of parameters allows to reproduce those illusions with less orientations, but we should also mention that some works in the literature actually use a high number of orientations in cortical models (e.g. 64 orientations in \cite{Teich2010}).  

Finally, we notice that the output of 3D models often shows oscillations. For some illusions (white and dungeon), the \eqref{eq:WC3Dtag} model produces more oscillatory solutions than \eqref{eq:LHE3Dtag}, and for others (Chevreul brightness, grating induction, and luminace gradient), the \eqref{eq:LHE3Dtag} have stronger oscillations than \eqref{eq:WC3Dtag}. The relation between the model parameters and possible dependence of the target surrounding is a matter of future research.


\section{Conclusions and future work}

We consider Wilson-Cowan-type models describing neuronal dynamics and apply them to the study of  replication of brightness visual illusions. 

We show that Wilson-Cowan equations are able to replicate a number of brightness illusions and that their variational modification, accounting for changes in the local contrast  and performing local histogram equalisation, outperforms them.
We consider also extensions of both models accounting for explicit local orientation dependence, in agreement with the architecture of V1. Although in the case of pure brightness illusions we found no real advantage in considering models taking into account orientations, these turned out to be necessary for the replication of  two exemplary orientation-dependent illusions, which only the 3D LHE variational model is able to reproduce.

In order to understand and fully exploit the potential of the orientation-dependent LHE model, further research should be done. In particular, a more accurate modelling reflecting the actual structure of V1 should be addressed. This concerns first the lift operation, where the cake wavelet should be replaced by the more physiologically plausible Gabor filters, as well as the interaction weight $\omega$ which could be taken to be the anisotropic heat kernel of \cite{Citti2006,sarti2015constitution,Duits2010}.
The design of appropriate psychophysics experiments testing the visual illusions considered in this work and their match with our models' outputs
is clearly a further important research direction, which would turn our qualitative study into a quantitative one.  The problem of matching computational models of perception with psychophysical data is in fact not trivial, but necessary to provide insights about how visual perception works and to identify the computational parameters able to reproduce the perceptual bias induced by these phenomena.

\section*{Acknowledgements and Grants}

M.~B. would like to thank the organizers of the conference to celebrate Jack Cowan's 50 years at the University of Chicago for their kind invitation to attend that meeting, which served as inspiration for this work, and also acknowledges the support of the European Union’s Horizon 2020 research and innovation programme under grant agreement number 761544 (project HDR4EU) and under grant agreement number 780470 (project SAUCE), and of the Spanish government and FEDER Fund, grant ref. PGC2018-099651-B-I00 (MCIU/AEI/FEDER, UE).
L.~C., V.~F.\ and D.~P.\ acknowledge the support of a public grant overseen by the French National Research Agency (ANR) as part of the \emph{Investissement d'avenir program}, through the iCODE project funded by the IDEX Paris-Saclay, ANR-11-IDEX-0003-02 and of the research project \emph{LiftME} funded by INS2I, CNRS. 
L.~C.\ and V.~F.\ acknowledge the support provided by the \emph{Fondation Math\'ematique Jacques Hadamard}. V.~F.\ acknowledges the support received from the European Union's Horizon 2020 research and innovation programme under the \emph{Marie Sk\l odowska-Curie grant No 794592}.
V.~F.\ and D.~P.\ also acknowledge the support of ANR-15-CE40-0018 project \textit{SRGI - Sub-Riemannian Geometry and Interactions}. B.~F.\ acknowledges the support of the Fondation Asile des Aveugles.


\section*{Disclosures}
All authors declare that there is no commercial relationship relevant to the subject matter of presentation.

\section*{Author contributions}
All authors equally contributed to this work.

\appendix
\section{Non-variational nature of Wilson-Cowan equation}\label{a:non-var}

In this section we show that, for non-trivial choices of weight and sigmoid functions, Wilson-Cowan equations do not admit a variational formulation.

For the sake of simplicity, we will consider only a finite dimensional variant of Wilson-Cowan equations, with constant input. Namely, for $a:\mathbb R\to \mathbb R^n$, we consider
\begin{equation}\label{eq:wc-a}
  \frac{d}{dt} a(t) = -\mu a(t) + W\sigma(a(t))+h.
\end{equation}
Here, $h\in\R^n$ is the input, $\mu>0$ is a parameter, $\sigma\in C^1(\R)$ is any function (we denoted $\sigma(v)=(\sigma(v_i))_i$ for $v\in\R^n$), and $W\in \R^{n\times n}$ is a symmetric interaction kernel. For a proof in the infinite-dimensional setting we refer to \cite{BCFFP_JMIV}

Equation \eqref{eq:wc-a} admits a variational formulation if it can be written as the steepest descent associated with a functional $J:\R^n\to \R$, i.e.,
\begin{equation}\label{eq:steepest}
  \frac{d}{dt} a(t)= -\nabla J(a(t)).
\end{equation}
We have the following.

\begin{theorem}
  The Wilson-Cowan equation \eqref{eq:wc-a} admits the variational formulation \eqref{eq:steepest} only if either $W$ is a diagonal matrix, or $\sigma$ is an affine function, i.e., $\sigma(x)=\alpha x+\beta$ for some $\alpha,\beta\in\R$. 
\end{theorem}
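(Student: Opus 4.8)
The plan is to use the standard characterization of gradient vector fields: a smooth vector field $F:\mathbb R^n\to\mathbb R^n$ is of the form $-\nabla J$ for some $J\in C^2$ if and only if its Jacobian $DF(a)$ is symmetric for every $a$ (on the simply connected domain $\mathbb R^n$, this is equivalent to the vanishing of the relevant curl / the closedness of the associated $1$-form). So first I would write the right-hand side of \eqref{eq:wc-a} as $F(a) = -\mu a + W\sigma(a) + h$ and compute its Jacobian. Since $\sigma$ acts componentwise, $\frac{\partial}{\partial a_j}\big(W\sigma(a)\big)_i = W_{ij}\,\sigma'(a_j)$, so
\begin{equation*}
  DF(a)_{ij} = -\mu\,\delta_{ij} + W_{ij}\,\sigma'(a_j).
\end{equation*}

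Next I would impose symmetry: $DF(a)_{ij} = DF(a)_{ji}$ for all $a$ and all $i,j$. The diagonal term $-\mu\delta_{ij}$ is already symmetric, so the condition reduces to $W_{ij}\,\sigma'(a_j) = W_{ji}\,\sigma'(a_i)$ for all $i,j$ and all $a\in\mathbb R^n$. Using that $W$ is symmetric ($W_{ij}=W_{ji}$), this becomes $W_{ij}\big(\sigma'(a_j) - \sigma'(a_i)\big) = 0$ for all $a$. Fixing an off-diagonal pair $i\neq j$ with $W_{ij}\neq 0$ and choosing $a$ with $a_i, a_j$ arbitrary then forces $\sigma'(a_j) = \sigma'(a_i)$ for all real values $a_i, a_j$; hence $\sigma'$ is constant, i.e. $\sigma$ is affine. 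The contrapositive gives exactly the statement: if $\sigma$ is not affine, then $W_{ij}=0$ for every $i\neq j$, i.e. $W$ is diagonal.

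I would then add the (routine) converse direction for completeness — if $W$ is diagonal the equation decouples into $n$ scalar ODEs, each of which is trivially a gradient flow of a primitive; and if $\sigma(x)=\alpha x+\beta$ then $F(a) = -\mu a + \alpha W a + (\beta W\mathbf 1 + h)$ is affine with symmetric linear part $-\mu I + \alpha W$, hence a gradient field with quadratic-plus-linear potential $J(a) = \tfrac12\langle(\mu I - \alpha W)a,a\rangle - \langle \beta W\mathbf 1 + h, a\rangle$. One minor point to handle carefully: the statement as phrased is an ``only if'', so strictly only the forward implication is required, but I would remark on the converse so the dichotomy is seen to be sharp.

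The only real subtlety — the ``hard part'', though it is mild here — is making sure the symmetry-of-Jacobian criterion is being applied legitimately: one needs $F\in C^1$ (guaranteed by $\sigma\in C^1$) and a simply connected domain (here all of $\mathbb R^n$), and one needs to know that $J$, if it exists, may be taken $C^2$ so that $\nabla J$ has a symmetric Jacobian $\Hess J$. The cleanest way to avoid any regularity worry about $J$ is to argue directly with the line-integral / closed-$1$-form formulation: $F=-\nabla J$ forces $\partial_j F_i = \partial_i F_j$ by equality of mixed partials of $J$, and this is already enough for the forward direction. Everything else is elementary componentwise algebra, so I expect no genuine obstacle beyond stating these hypotheses explicitly.
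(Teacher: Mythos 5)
Your argument is correct and is essentially the paper's own proof: both compute the mixed second derivatives $\partial_{ij}J(v)=\mu\delta_{ij}-W_{ij}\sigma'(v_j)$ (you phrase it as symmetry of the Jacobian of the vector field, the paper as symmetry of $\operatorname{Hess}J$) and conclude from $W_{ij}\neq 0$ for some $i\neq j$ that $\sigma'$ is constant. The converse you add is a harmless extra not present in the paper.
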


\begin{proof}
Writing \eqref{eq:wc-a} and \eqref{eq:steepest} componentwise, we find the following relation for $J$:
\begin{equation*}\label{eq:der}
  \partial_i J(v) = \mu v_i -\sum_{k}W_{\ell,k}\sigma(v_\ell)-h_i, \qquad v=(v_1,\ldots,v_n)\in\R^n, \quad i=1,\ldots,n.
\end{equation*}
By differentiating again the above, and letting $\delta_{ij}$ denote the Kroenecker delta symbol, we have
\begin{equation}
  \partial_{ij} J(v) = \mu \delta_{ij} -\sum_{k}W_{\ell,k}\sigma'(v_\ell)\delta_{j\ell} = \mu \delta_{ij} -W_{ij}\sigma'(v_j),\qquad i,j=1,\ldots,n.
\end{equation}
Namely, $\operatorname{Hess}J(v)=(\mu\delta_{ij}-W_{ij}\sigma'(v_j))_{ij}$. 
Assume that $W$ is not a diagonal matrix. Then, since both the Hessian matrix and $W$ are symmetric, by choosing $i\neq j$ such that $W_{ij}\neq0$ we get
\begin{equation}
  \sigma'(v_i) = \sigma'(v_j) \qquad \forall v\in\R^n.
\end{equation}
This clearly implies that $\sigma'$ is constant, thus showing that $\sigma$ must be an affine function.
\end{proof}

  We observe that the above reasoning does not apply to the LHE algorithm. Indeed, the discrete form of the latter is
  \begin{equation}\tag{LHE}
    \frac{d}{dt} a(t) = -\mu a(t) + \sum_{\ell}W_{i\ell}\sigma\big(a_i(t)-a_\ell(t)\big)+h.
  \end{equation}
  Then, the corresponding variational equation (for $\mu=0$ and $h=0$) is
  \begin{equation}
    \partial_i J(v) = -\sum_{\ell\neq i} W_{i\ell}\sigma(v_i-v_\ell), \qquad v\in\R^n. 
  \end{equation}
  This yields
  \begin{equation}
    \partial_{ji}J(v)= W_{ij}\sigma'(v_i-v_j),\qquad \text{for }v\in\R^n,\quad i\neq j.
  \end{equation}
  This does not contradict the symmetry of the Hessian, as $\sigma$ was chosen to be odd an thus $\sigma'$ is even. Indeed, we know by \cite{BertalmioCowan2009} that we can let
  \begin{equation}
    J(v) := \sum_{k,\ell} W_{k\ell}\Sigma(v_k-v_\ell),
  \end{equation}
  where $\Sigma$ is such that $\Sigma'= \sigma$.

\section{Encoding orientation-dependence via cortical-inspired models}  \label{a:cortical}

Orientation dependence of the visual stimulus is encoded via cortical inspired techniques, following e.g., \cite{Citti2006,Duits2010,Petitot,Prandi2017,Bohi2017}. 
The main idea at the base of these works goes back to the 1959 paper \cite{HubelWiesel59} by Hubel and Wiesel (Nobel prize in 1981) who discovered the so-called \emph{hypercolumn functional architecture} of the visual cortex V1.
Following \cite{HubelWiesel59}, each neuron $\xi$ in V1 detects couples $(x,\theta)$ where $x\in\R^2$ is a retinal position and $\theta$ is a direction at $x$. Orientation preferences $\theta$ are then organised in  hypercolumns over the retinal position $x$, see \cite[Section~2]{Petitot}. 


Let $Q\subset \R^2$ be the visual plane. To a visual stimulus $f:Q\to [0,1]$ is  associated a cortical activation $Lf:Q\times[0,\pi)\to \R$ such that $Lf(\xi)$ encodes the response of the neuron $\xi=(x,\theta)$. Letting $\psi_\xi\in L^2(\R^2)$ be the receptive profile (RP) of the neuron $\xi$, such response is assumed to be given by
\begin{equation}
  Lf(\xi) = \left\langle \psi_\xi, f \right\rangle_{L^2(\R^2)} = \int_{Q} \overline{\psi_\xi(x)}f(x)\,dx.
\end{equation}
Motivated by neuro-phyisiological evidence, we assume that RPs of different neurons are ``deducible'' one from the other via a linear transformation. 
As detailed in \cite{Duits2010,Prandi2017}, see also \cite[Section~3.1]{BCFFP_SSVM}, this amounts to the fact that the linear operator $L:L^2(Q)\to L^2(Q\times[0,\pi))$ is a continuous wavelet transform (also called \emph{invertible orientation score transform}). That is, there exists a \emph{mother wavelet} $\Psi\in L^2(\bR^2)$ such that $Lf(x,\theta) = \big[f * (\Psi^*\circ R_{-\theta})\big] (x)$. Here, $f*g$ denotes the standard convolution on $L^2(\bR^2)$ and $R_{-\theta}$ is the counter-clock-wise rotation of angle $\theta$. Notice that, although images are functions of $L^2(\bR^2)$ with values in $[0,1]$, it is in general not true that $Lf(x,\theta)\in [0,1]$.

Concerning the choice of the mother wavelet, we remark that neuro-physiological evidence suggests that a good fit for the RPs is given by Gabor filters, whose Fourier transform is the product of a Gaussian with an oriented plane wave \cite{Daugman1985a}.  However, these filters are quite challenging to invert, and are parametrised on a bigger space than $\mathcal M$, which takes into account also the frequency of the plane wave and not only its orientation. For this reason, in this work we instead considered \emph{cake wavelets}, introduced in \cite{duits2007image, Bekkers2014}. These are obtained via a mother wavelet $\Psi^{\text{cake}}$ whose support in the Fourier domain is concentrated on a fixed slice, depending on the number of orientations one aims to consider in the numerical implementation. For the sake of integrability, the Fourier transform of this mother wavelet is then smoothly cut off via a low-pass filtering, see \cite[Section ~2.3]{Bekkers2014} for details. Observe, however, that, since we are considering orientations on $[0,\pi)$ and not directions on $[0,2\pi)$, we choose a non-oriented version of the mother wavelet, given by $\tilde\psi^{cake}(\mathbf{\omega}) + \tilde\psi^{cake}(e^{i\pi}\mathbf{\omega})$, in the notations of \cite{Bekkers2014}.

An important feature of cake wavelets is that, in order to recover the original stimulus from its cortical activation, it suffices to simply ``project'' the cortical activations along hypercolumns. This yields
\begin{equation}
    f(x) := \frac1\pi\int_{0}^\pi Lf(x,\theta)\,d\theta.
\end{equation}
This justify the assumption, implicit in equation \eqref{eq:proj}, that the projection of a cortical activation $F$ (not necessarily given by a visual stimulus) to the visual plane is given by
\begin{equation}
  PF(x) = \frac1\pi \int_{0}^\pi F(x,\theta)\,d\theta.
\end{equation}

\bibliographystyle{apalike} 
\bibliography{biblio}

\begin{thebibliography}{}

\bibitem[Adini et~al., 1997]{Adini1997}
Adini, Y., Sagi, D., and Tsodyks, M. (1997).
\newblock Excitatory--inhibitory network in the visual cortex: Psychophysical
  evidence.
\newblock {\em Proceedings of the National Academy of Sciences},
  94(19):10426--10431.

\bibitem[Atick, 1992]{Atick1992}
Atick, J.~J. (1992).
\newblock Could information theory provide an ecological theory of sensory
  processing?
\newblock {\em Network: Computation in Neural Systems}, 3(2):213--251.

\bibitem[Attneave, 1954]{Attneave1954}
Attneave, F. (1954).
\newblock Some informational aspects of visual perception.
\newblock {\em Psychological review}, 61(3):183.

\bibitem[Barlow et~al., 1961]{Barlow1961}
Barlow, H.~B. et~al. (1961).
\newblock Possible principles underlying the transformation of sensory
  messages.
\newblock {\em Sensory communication}, 1:217--234.

\bibitem[Bekkers et~al., 2014]{Bekkers2014}
Bekkers, E., Duits, R., Berendschot, T., and ter Haar~Romeny, B. (2014).
\newblock A multi-orientation analysis approach to retinal vessel tracking.
\newblock {\em JMIV}, 49(3):583--610.

\bibitem[Benucci et~al., 2013]{Benucci2013}
Benucci, A., Saleem, A.~B., and Carandini, M. (2013).
\newblock Adaptation maintains population homeostasis in primary visual cortex.
\newblock {\em Nature neuroscience}, 16(6):724.

\bibitem[Bertalm\'io, 2014]{BertalmioFrontiers2014}
Bertalm\'io, M. (2014).
\newblock From image processing to computational neuroscience: a neural model
  based on histogram equalization.
\newblock {\em Front. Comput. Neurosc.}, 8:71.

\bibitem[Bertalm\'io et~al., 2019a]{BCFFP_SSVM}
Bertalm\'io, M., Calatroni, L., Franceschi, V., Franceschiello, B., and Prandi,
  D. (2019a).
\newblock A cortical-inspired model for orientation-dependent contrast
  perception: A link with {W}ilson-{C}owan equations.
\newblock In Lellmann, J., Burger, M., and Modersitzki, J., editors, {\em Scale
  Space and Variational Methods in Computer Vision}, pages 472--484, Cham.
  Springer International Publishing.

\bibitem[Bertalm\'io et~al., 2019b]{BCFFP_JMIV}
Bertalm\'io, M., Calatroni, L., Franceschi, V., Franceschiello, B., and Prandi,
  D. (2019b).
\newblock Cortical-inspired {W}ilson-{C}owan-type equations for
  orientation-dependent contrast perception modelling.
\newblock arXiv preprint: \url{https://arxiv.org/abs/1910.06808}.

\bibitem[Bertalm{\'{i}}o et~al., 2007]{Bertalmio2007}
Bertalm{\'{i}}o, M., Caselles, V., Provenzi, E., and Rizzi, A. (2007).
\newblock {Perceptual color correction through variational techniques}.
\newblock {\em IEEE T. Image Process.}, 16(4):1058--1072.

\bibitem[Bertalm{\'\i}o and Cowan, 2009]{BertalmioJPP2009}
Bertalm{\'\i}o, M. and Cowan, J.~D. (2009).
\newblock Implementing the retinex algorithm with wilson--cowan equations.
\newblock {\em Journal of Physiology-Paris}, 103(1-2):69--72.

\bibitem[Bertalm\'io and Cowan, 2009]{BertalmioCowan2009}
Bertalm\'io, M. and Cowan, J.~D. (2009).
\newblock Implementing the retinex algorithm with {W}ilson{-}{C}owan equations.
\newblock {\em J. Physiol. Paris}, 103(1):69 -- 72.

\bibitem[Bertalm{\'\i}o et~al., 2017]{BertalmioVSS2017}
Bertalm{\'\i}o, M., Cyriac, P., Batard, T., Martinez-Garcia, M., and Malo, J.
  (2017).
\newblock The wilson-cowan model describes contrast response and subjective
  distortion.
\newblock {\em J Vision}, 17(10):657--657.

\bibitem[Beurle and Matthews, 1956]{Beurle1956}
Beurle, R.~L. and Matthews, B. H.~C. (1956).
\newblock Properties of a mass of cells capable of regenerating pulses.
\newblock {\em Philosophical Transactions of the Royal Society of London.
  Series B, Biological Sciences}, 240(669):55--94.

\bibitem[Bezanson et~al., 2017]{bezanson2017julia}
Bezanson, J., Edelman, A., Karpinski, S., and Shah, V.~B. (2017).
\newblock Julia: A fresh approach to numerical computing.
\newblock {\em SIAM review}, 59(1):65--98.

\bibitem[Blakeslee et~al., 2016]{Blakeslee2016}
Blakeslee, B., Cope, D., and McCourt, M.~E. (2016).
\newblock The oriented difference of gaussians ({ODOG}) model of brightness
  perception: Overview and executable \emph{{M}athematica} notebooks.
\newblock {\em Behav. Res. Methods}, 48(1):306--312.

\bibitem[Bohi et~al., 2017]{Bohi2017}
Bohi, A., Prandi, D., Guis, V., Bouchara, F., and Gauthier, J.-P. (2017).
\newblock Fourier descriptors based on the structure of the human primary
  visual cortex with applications to object recognition.
\newblock {\em Journal of Mathematical Imaging and Vision}, 57(1):117--133.

\bibitem[Boscain et~al., 2018]{Boscain2018}
Boscain, U.~V., Chertovskih, R., Gauthier, J.-P., Prandi, D., and Remizov, A.
  (2018).
\newblock Highly corrupted image inpainting through hypoelliptic diffusion.
\newblock {\em Journal of Mathematical Imaging and Vision}, 60(8):1231--1245.

\bibitem[Bressan, 2001]{bressan2001explaining}
Bressan, P. (2001).
\newblock Explaining lightness illusions.
\newblock {\em Perception}, 30(9):1031--1046.

\bibitem[Brucke, 1865]{bruke}
Brucke, E. (1865).
\newblock uber erganzungs und contrasfarben.
\newblock {\em Wiener Sitzungsber}, 51.

\bibitem[Chariker et~al., 2016]{Chariker2016}
Chariker, L., Shapley, R., and Young, L.-S. (2016).
\newblock Orientation selectivity from very sparse lgn inputs in a
  comprehensive model of macaque v1 cortex.
\newblock {\em Journal of Neuroscience}, 36(49):12368--12384.

\bibitem[Citti and Sarti, 2006]{Citti2006}
Citti, G. and Sarti, A. (2006).
\newblock A cortical based model of perceptual completion in the
  roto-translation space.
\newblock {\em JMIV}, 24(3):307--326.

\bibitem[Cowan et~al., 2016]{Cowan2016}
Cowan, J.~D., Neuman, J., and van Drongelen, W. (2016).
\newblock Wilson--cowan equations for neocortical dynamics.
\newblock {\em The Journal of Mathematical Neuroscience}, 6(1):1.

\bibitem[Daugman, 1985]{Daugman1985a}
Daugman, J.~G. (1985).
\newblock {Uncertainty relation for resolution in space, spatial frequency, and
  orientation optimized by two-dimensional visual cortical filters.}
\newblock {\em J. Opt. Soc. Am. A}, 2(7):1160--1169.

\bibitem[DeValois and DeValois, 1990]{devalois1990spatial}
DeValois, R.~L. and DeValois, K.~K. (1990).
\newblock {\em Spatial vision}.
\newblock Oxford university press.

\bibitem[Duits et~al., 2007]{duits2007image}
Duits, R., Felsberg, M., Granlund, G., and ter Haar~Romeny, B. (2007).
\newblock Image analysis and reconstruction using a wavelet transform
  constructed from a reducible representation of the euclidean motion group.
\newblock {\em International Journal of Computer Vision}, 72(1):79--102.

\bibitem[Duits and Franken, 2010]{Duits2010}
Duits, R. and Franken, E. (2010).
\newblock Left-invariant parabolic evolutions on {$SE(2)$} and contour
  enhancement via invertible orientation scores. {Part I}: linear
  left-invariant diffusion equations on {$SE(2)$}.
\newblock {\em Quart. Appl. Math.}, 68(2):255--292.

\bibitem[Eagleman, 1959]{Eagleman2001}
Eagleman, D.~M. (1959).
\newblock Visual illusions and neurobiology.
\newblock {\em Nature reviews Neuroscience}, (2):920–926 (2001).

\bibitem[Ernst et~al., 2016]{Ernst2016}
Ernst, U.~A., Schiffer, A., Persike, M., and Meinhardt, G. (2016).
\newblock Contextual interactions in grating plaid configurations are explained
  by natural image statistics and neural modeling.
\newblock {\em Frontiers in systems neuroscience}, 10:78.

\bibitem[Fairhall et~al., 2001]{Fairhall2001}
Fairhall, A.~L., Lewen, G.~D., Bialek, W., and van Steveninck, R. R. d.~R.
  (2001).
\newblock Efficiency and ambiguity in an adaptive neural code.
\newblock {\em Nature}, 412(6849):787.

\bibitem[Faugeras, 2009]{Faugeras2009}
Faugeras, O. (2009).
\newblock {A constructive mean-field analysis of multi population neural
  networks with random synaptic weights and stochastic inputs}.
\newblock {\em Frontiers in Computational Neuroscience}, 3.

\bibitem[French, 2004]{French2004}
French, D. (2004).
\newblock Identification of a free energy functional in an integro-differential
  equation model for neuronal network activity.
\newblock {\em Applied Mathematics Letters}, 17(9):1047 -- 1051.

\bibitem[Geier and Hud{\'a}k, 2011]{geier2011changing}
Geier, J. and Hud{\'a}k, M. (2011).
\newblock Changing the chevreul illusion by a background luminance ramp:
  lateral inhibition fails at its traditional stronghold-a psychophysical
  refutation.
\newblock {\em PloS One}, 6(10):e26062.

\bibitem[Herzog et~al., 2003]{Herzog2003}
Herzog, M.~H., Ernst, U.~A., Etzold, A., and Eurich, C.~W. (2003).
\newblock Local interactions in neural networks explain global effects in
  gestalt processing and masking.
\newblock {\em Neural Computation}, 15(9):2091--2113.

\bibitem[Hong and Shevell, 2004]{hong2004brightness}
Hong, S.~W. and Shevell, S.~K. (2004).
\newblock Brightness contrast and assimilation from patterned inducing
  backgrounds.
\newblock {\em Vision Research}, 44(1):35--43.

\bibitem[Hubel and Wiesel, 1959]{HubelWiesel59}
Hubel, D. and Wiesel, T. (1959).
\newblock Receptive fields of single neurones in the cat's striate cortex.
\newblock {\em The Journal of physiology}, 148(3):574–591.

\bibitem[Kingdom, 2011]{Kingdom2011}
Kingdom, F.~A. (2011).
\newblock Lightness, brightness and transparency: A quarter century of new
  ideas, captivating demonstrations and unrelenting controversy.
\newblock {\em Vision Research}, 51(7):652--673.

\bibitem[Kitaoka, 2006]{kitaoka}
Kitaoka, A. (2006).
\newblock Adelson’s checker-shadow illusion-like gradation lightness
  illusion.
\newblock
  \url{http://www.psy.ritsumei.ac.jp/~akitaoka/gilchrist2006mytalke.html}.
\newblock Accessed: 2018-11-03.

\bibitem[Mante et~al., 2005]{Mante2005}
Mante, V., Frazor, R.~A., Bonin, V., Geisler, W.~S., and Carandini, M. (2005).
\newblock Independence of luminance and contrast in natural scenes and in the
  early visual system.
\newblock {\em Nature neuroscience}, 8(12):1690.

\bibitem[McCourt, 1982]{mccourt1982spatial}
McCourt, M.~E. (1982).
\newblock A spatial frequency dependent grating-induction effect.
\newblock {\em Vision Research}, 22(1):119--134.

\bibitem[Meister and Berry, 1999]{Meister1999}
Meister, M. and Berry, M.~J. (1999).
\newblock The neural code of the retina.
\newblock {\em Neuron}, 22(3):435--450.

\bibitem[Murray and Herrmann, 2013]{Murray2013}
Murray, M.~M. and Herrmann, C. (2013).
\newblock Illusory contours: a window onto the neurophysiology of constructing
  perception.
\newblock {\em Trends Cogn Sci.}, (17(9)):471--81.

\bibitem[Olshausen and Field, 2000]{Olshausen2000}
Olshausen, B.~A. and Field, D.~J. (2000).
\newblock Vision and the coding of natural images: The human brain may hold the
  secrets to the best image-compression algorithms.
\newblock {\em AmSci}, 88(3):238--245.

\bibitem[Otazu et~al., 2008]{Otazu2008}
Otazu, X., Vanrell, M., and Parraga, C.~A. (2008).
\newblock Multiresolution wavelet framework models brightness induction
  effects.
\newblock {\em Vis. Res.}, 48(5):733 -- 751.

\bibitem[Pattadkal et~al., 2018]{Pattadkal2018}
Pattadkal, J.~J., Mato, G., van Vreeswijk, C., Priebe, N.~J., and Hansel, D.
  (2018).
\newblock Emergent orientation selectivity from random networks in mouse visual
  cortex.
\newblock {\em Cell reports}, 24(8):2042--2050.

\bibitem[Petitot, 2017]{Petitot}
Petitot, J. (2017).
\newblock {\em Elements of Neurogeometry: Functional Architectures of Vision}.
\newblock Lecture Notes in Morphogenesis. Springer International Publishing.

\bibitem[Prandi and Gauthier, 2017]{Prandi2017}
Prandi, D. and Gauthier, J.-P. (2017).
\newblock {\em {A semidiscrete version of the Petitot model as a plausible
  model for anthropomorphic image reconstruction and pattern recognition}}.
\newblock SpringerBriefs in Mathematics. Springer International Publishing,
  Cham.

\bibitem[Purves et~al., 2008]{Purves2008}
Purves, D., Wojtach, W.~T., and Howe, C. (2008).
\newblock {V}isual illusions: {A}n {E}mpirical {E}xplanation.
\newblock {\em Scholarpedia}, 3(6):3706.
\newblock revision \#89112.

\bibitem[Ratliff, 1965]{ratliff1965mach}
Ratliff, F. (1965).
\newblock {\em Mach bands: quantitative studies on neural networks}.
\newblock Holden-Day, San Francisco London Amsterdam.

\bibitem[Sarti and Citti, 2015]{sarti2015constitution}
Sarti, A. and Citti, G. (2015).
\newblock The constitution of visual perceptual units in the functional
  architecture of {V1}.
\newblock {\em J. comput. neurosc.}, 38(2):285--300.

\bibitem[Scholl et~al., 2013]{Scholl2013}
Scholl, B., Tan, A.~Y., Corey, J., and Priebe, N.~J. (2013).
\newblock Emergence of orientation selectivity in the mammalian visual pathway.
\newblock {\em Journal of Neuroscience}, 33(26):10616--10624.

\bibitem[Shapiro and Todorovic, 2016]{shapiro2016oxford}
Shapiro, A.~G. and Todorovic, D. (2016).
\newblock {\em The Oxford compendium of visual illusions}.
\newblock Oxford University Press.

\bibitem[Smirnakis et~al., 1997]{Smirnakis1997}
Smirnakis, S.~M., Berry, M.~J., Warland, D.~K., Bialek, W., and Meister, M.
  (1997).
\newblock Adaptation of retinal processing to image contrast and spatial scale.
\newblock {\em Nature}, 386(6620):69.

\bibitem[Teich and Qian, 2010]{Teich2010}
Teich, A.~F. and Qian, N. (2010).
\newblock V1 orientation plasticity is explained by broadly tuned feedforward
  inputs and intracortical sharpening.
\newblock {\em Visual neuroscience}, 27(1-2):57--73.

\bibitem[Weintraub and Krantz, 1971]{Weintraub1971}
Weintraub, D.~J. and Krantz, D.~H. (1971).
\newblock {The Poggendorff illusion: amputations, rotations, and other
  perturbations.}
\newblock {\em Attent. Percept. Psycho.}, 10(4):257--264.

\bibitem[White, 1979]{white1979new}
White, M. (1979).
\newblock A new effect of pattern on perceived lightness.
\newblock {\em Perception}, 8(4):413--416.

\bibitem[Wilson, 2003]{Wilson2003}
Wilson, H.~R. (2003).
\newblock Computational evidence for a rivalry hierarchy in vision.
\newblock {\em Proceedings of the National Academy of Sciences},
  100(24):14499--14503.

\bibitem[Wilson, 2007]{Wilson2007}
Wilson, H.~R. (2007).
\newblock Minimal physiological conditions for binocular rivalry and rivalry
  memory.
\newblock {\em Vision Research}, 47(21):2741 -- 2750.

\bibitem[Wilson, 2017]{Wilson2017}
Wilson, H.~R. (2017).
\newblock Binocular contrast, stereopsis, and rivalry: Toward a dynamical
  synthesis.
\newblock {\em Vision Research}, 140:89--95.

\bibitem[Wilson and Cowan, 1972]{WilsonCowan1973}
Wilson, H.~R. and Cowan, J.~D. (1972).
\newblock Excitatory and inhibitory interactions in localized populations of
  model neurons.
\newblock {\em BioPhys. J.}, 12(1).

\bibitem[Wilson and Cowan, 1973]{Wilson1973bis}
Wilson, H.~R. and Cowan, J.~D. (1973).
\newblock A mathematical theory of the functional dynamics of cortical and
  thalamic nervous tissue.
\newblock {\em Kybernetik}, 13(2):55--80.

\end{thebibliography}

\end{document}